 \theoremstyle{plain}
 \newtheorem{Prop}{Proposition}
\def\edoc{
\input jidef
\def\Card{\#}
\def\dd{\,d}
\def\leq{\le}
\def\geq{\ge}
\renewcommand{\det}[1]{\bars{#1}}
\def\rang{{\mathrm{rank}}}
\def\V{.}              
\def\e#1{\rm{e}^{#1}}  
\def\numero{n\ensuremath{^{\circ}}}

\def\et{{\normalfont and }}
\def\editorname{(ed.)}
\def\editornames{(eds.)}
\def\eqname{Eq.~}
\def\eqnames{Eqs~}
\def\tabname{Table~}
\def\tabnames{Tables~}
\def\figname{Figure~}
\def\fignames{Figures~}
\def\chapname{Chapter~}
\def\chapnames{Chapters~}
\def\parname{\S~}
\def\parnames{\S~}
\def\sectname{Section~}
\def\sectnames{Sections~}
\def\theoname{Theorem~}

\def\gui#1{``#1''}
\def\<{``}\def\>{''}
\def\ier{st\XS}
\def\iere{st\XS}
\def\ieme{th\XS}

\def\SI{\text{if\:}}       \def\Si{\text{If\:}}
\def\ET{\text{and\:}}       \def\OU{\text{or\:}}
\def\ALORS{\text{then\:}}
\def\DOU{\text{hence\:}}  \def\Ou{\text{where\:}}
\def\QUAND{\text{when\:}}
\def\POUR{\text{for\:}}   \def\POURTOUT{\text{for all\:}}
\def\SC{\text{u.\,c.\:}}   \def\SOUSC{\text{under constraints\:}}
\def\SINON{\text{otherwise}}
\def\AVEC{\text{with\:}}
\def\DANS{\text{in\:}}
\def\IFF{\textit{if and only if}\XS}
\def\ssi{\textit{si et seulement si}\XS}

\def\post{\textit{posterior}\XS}
\def\Post{\textit{Posterior}\XS}
\def\etal{{\itshape et al.}}
\def\etcoll{\textit{et al.}}
\def\andname{and\xspace}
\def\Toeplitz{Toeplitz\XS}
\def\wrt{w.r.t.\ }

\def\croext#1{\left(#1\right)}             \def\stdcroext#1{(#1)}
\def\bigcroext#1{\bigl(#1\bigr)}           \def\biggcroext#1{\biggl(#1\biggr)}
\def\Bigcroext#1{\Bigl(#1\Bigr)}           \def\Biggcroext#1{\Biggl(#1\Biggr)}

\newcommand{\DRAFT}[1]{#1}
\newcommand{\FINAL}[1]{}
\def\factor{0.58}

\definecolor{darkred}{rgb}{.5,0,0}
\newcommand{\JI}[1]{{\textcolor{blue}{#1}}}
\newcommand{\comJI}[1]{\emph{\textcolor{blue}{#1}}}
\newcommand{\delJI}[1]{\setstcolor{blue}\st{#1}}
\newcommand{\cor}[1]{\textcolor{darkred}{#1}}
\newcommand{\said}[1]{\textcolor{magenta}{\sf \small Sa\"id~:  #1}}
\newcommand{\com}[1]{{\emph{#1}}}
\newcommand{\rep}[1]{{\newline \textcolor{darkred}{{\bf Reply:} #1}}}
\newcommand{\xbold}{\ensuremath{\underline \xb}}
\newcommand{\xbnew}{\ensuremath{\overline \xb}}
\pagestyle{myheadings}
\topmargin -1cm
\headsep 1cm

\markboth{Reversible Jump MCMC for Gaussian Sampling}{Reversible Jump MCMC for Gaussian Sampling} 

\begin{document}
\title{Efficient Gaussian Sampling for Solving Large-Scale Inverse Problems using MCMC Methods\footnote{This work was supported by the french CNRS and the R\'egion des Pays de la Loire, France.}}

 \author{Cl\'ement~Gilavert, Sa\"id Moussaoui and J\'er\^ome~Idier \\[3mm]
Ecole Centrale Nantes, IRCCyN,  CNRS UMR 6597,\\
1 rue de la No\"e, 44321, Nantes Cedex 3, France.\\[3mm]
{\normalsize Corresponding author: \url{said.moussaoui@ec-nantes.fr}}}
\maketitle

\begin{abstract}
The resolution of many large-scale inverse problems using MCMC methods requires a step of drawing samples from a high dimensional Gaussian distribution. While direct Gaussian sampling techniques, such as those based on Cholesky factorization, induce an excessive numerical complexity and memory requirement, sequential coordinate sampling methods present a low rate of convergence. Based on the reversible jump Markov chain framework, this paper proposes an efficient Gaussian sampling algorithm having a reduced computation cost and memory usage. The main feature of the algorithm is to perform an approximate resolution of a linear system with a truncation level adjusted using a self-tuning adaptive scheme allowing to achieve the minimal computation cost. The connection between this algorithm and some existing strategies is discussed and its efficiency is illustrated on a linear inverse problem of image resolution enhancement.
\paragraph{}\emph{This paper is under revision before publication in IEEE Transactions on Signal Processing}
\end{abstract}
%
%
\section{Introduction} \label{sec:intro}
A common inverse problem arising in several signal and image processing applications is to recover a hidden object $\xb \in \eR^N$ (\eg an image or a signal) from a set of measurements $\yb \in \eR^M$ given an observation model~\cite{Bertero98,Idier08book}. The most frequent case is that of a linear model between \xb and \yb according to
\begin{equation} \label{Eq_LinMod}
\yb =  \Hb \xb + \nb,
\end{equation}
with $\Hb\in\eR^{M\times N}$ the known observation matrix and $\nb$ an additive noise term representing measurement errors and model uncertainties. Such a linear model covers many real problems such as, for instance, denoising~\cite{Frieden75}, deblurring~\cite{Demoment89}, and reconstruction from projections~\cite{Gordon71,Lewitt03}. 

The statistical estimation of $\xb$ in a Bayesian simulation framework~\cite{Gilks99,Robert01} firstly requires the formulation of the posterior distribution~$P(\xb,\Theta|\yb)$, with $\Theta$ a set of unknown hyper-parameters. Pseudo-random samples of \xb are then drawn from this posterior distribution. Finally, a Bayesian estimator is computed from these samples. Other quantities of interest, such as posterior variances, can be estimated likewise. Within the \emph{standard Monte Carlo} framework, independent realizations of the posterior law must be generated, which is rarely possible in realistic cases of inverse problems. One rather resorts to \emph{Markov Chain Monte Carlo} (MCMC) schemes, where Markovian dependencies between successive samples are allowed. A very usual sampling scheme is then to iteratively draw realizations from the conditional posterior densities $P(\Theta|\xb,\yb)$ and $P(\xb|\Theta,\yb)$, according to a \emph{Gibbs sampler} \cite{Geman84}.

In such a context, when independent Gaussian models $\Nc\left(\mub_y,\Rb_y\right)$ and $\Nc\left(\mub_x,\Rb_x\right)$ are assigned to the noise statistics and to the unknown object distribution, respectively, the set of hyper-parameters $\Theta$ determines the mean and the covariance of the latter two distributions. This statistical model also covers the case of priors based on hierarchical or latent Gaussian models such as Gaussian scale mixtures~\cite{andrews1974scale,champagnat2004connection} and Gaussian Markov random fields~\cite{Geman84,papandreou2010gaussian}. The additional parameters  of such models are then included in $\Theta$. According to this Bayesian modeling, the conditional posterior distribution $P(\xb|\Theta,\yb)$ is also Gaussian, $\Nc\left(\mub,\Qb\M\right)$, with a precision matrix $\Qb$ (\ie the inverse of the covariance matrix \Rb) given by
\beq
\label{eq_Q}
\Qb=\Hb\T\Rb_y^{-1}\Hb+\Rb_x^{-1},
\eeq
and a mean vector $\mub$ such that:
\beq
\label{eq_Qmu}
\Qb\mub=\Hb\T\Rb_y^{-1}(\yb-\mub_y)+\Rb_x^{-1} \mub_x.
\eeq
Let us remark that the precision matrix $\Qb$ generally depends on the hyper-parameter set $\Theta$ through $\Rb_y$ and $\Rb_x$, so that $\Qb$ is a varying matrix along the Gibbs sampler iterations. Moreover, the mean vector~$\mub$ is expressed as the solution of a linear system where \Qb is the normal matrix. 

In order to draw samples from the conditional posterior distribution $P(\xb|\Theta,\yb)$, a usual way is to firstly perform the Cholesky factorization of the covariance matrix~\cite{Scheurer62,Barr72}. Since equation \eqref{eq_Q} yields the precision matrix \Qb rather than the covariance matrix $\Rb$, Rue~\cite{rue2001fast} proposed to compute the Cholesky decomposition of \Qb, \ie $\Qb=\Cb_q\Cb_q\T$, and to solve the triangular system $\Cb_q\T\xb=\omegab$, where $\omegab$ is a vector of independent Gaussian variables of zero mean and unit variance. Moreover, the Cholesky factorization is exploited to calculate the mean \mub from \eqref{eq_Qmu} by solving two triangular systems sequentially. However, the Cholesky factorization of \Qb generally requires $\mathcal{O}(N^3)$ operations. Spending such a numerical cost at each iteration of the sampling scheme rapidly becomes prohibitive for large values of $N$. In specific cases where \Qb belongs to certain families of structured matrices, the factorization can be obtained with a reduced numerical complexity, \eg 
$\mathcal{O}(N^2)$ when \Qb is Toeplitz \cite{trench1964algorithm} or even $\mathcal{O}(N\log N)$ when $\Qb$ is circulant~\cite{geman1995nonlinear}. Sparse matrices can be also factored at a reduced cost~\cite{rue2001fast,Lalanne01}. Alternative approaches to the Cholesky factorization are based on using an iterative method for the calculation of the inverse square root matrix of  $\Qb$ using Krylov subspace methods~\cite{parker2012sampling,aune2013iterative,Chow14}. In practice, even in such favorable cases, the factorization often remains a burdensome operation to be performed at each iteration of the Gibbs sampler. 

The numerical bottleneck represented by the Cholesky factorization can be removed by using alternative schemes that bypass the step of exactly sampling $P(\xb|\Theta,\yb)$. For instance, a simple alternative solution is to sequentially sample each entry of \xb given the other variables according to a scalar Gibbs scheme \cite{Amit91}. However, such a scalar approach reveals extremely inefficient when $P(\xb|\Theta,\yb)$ is strongly correlated, since each conditional sampling step will produce a move of very small variance. As a consequence, a huge number of iterations will be required to reach convergence. A better trade-off between the numerical cost of each iteration and the overall convergence speed of the sampler must be found. 

\paragraph{} In this paper, we focus on a two-step approach named \emph{Independent Factor Perturbation} in~\cite{papandreou2010gaussian} and \emph{Perturbation-Optimization} in~\cite{orieux2012sampling} (see also~\cite{tan2010efficient,Lalanne01}).
It consists in
\bit
\item drawing a sample $\etab$ from $\Nc\left(\Qb\mub, \Qb\right)$,
\item solving the linear system $\Qb\xb = \etab$.
\eit
It can be easily checked that, when the linear system is solved exactly, the new sample $\xb$ is distributed according to $\Nc\left(\mub, \Qb\M\right)$. Hereafter, we refer to this method as \emph{Exact Perturbation Optimization} (E-PO). However, the numerical cost of E-PO is typically as high as the Cholesky factorization of \Qb. Therefore, an essential element of the Perturbation Optimization approach is to truncate the linear system solving by running  a limited number of iterations of an iterative algorithm such as the conjugate gradient method (CG)~\cite{papandreou2010gaussian,orieux2012sampling,tan2010efficient}. For the sake of clarity, let us call the resulting version \emph{Truncated Perturbation Optimization} (T-PO).

Skipping from E-PO to T-PO allows to strongly reduce the numerical cost of each iteration. However, let us stress that no convergence analysis of T-PO exists, to our best knowledge. It is only argued that a well-chosen truncation level  will induce a significant reduction of the numerical cost and a small error on \xb. The way the latter error alters the convergence towards the target distribution remains a fully open issue, that has not been discussed in existing contributions. Moreover, how the resolution accuracy should be chosen in practice is also an open question. 

A first contribution of the present paper is to bring practical evidence that the T-PO algorithm does not necessarily converge towards the target distribution (see Section~\ref{sec:TPOvsRJPO}). In practice, the implicit trade-off within T-PO is between the computational cost and the error induced on the target distribution, depending on the adopted truncation level. Our second contribution is to propose a new scheme similar to T-PO, but with a guarantee of convergence to the target distribution, whatever the truncation level. We call the resulting scheme \emph{Reversible Jump Perturbation Optimization} (RJPO), since it incorporates an accept-reject step derived within the Reversible Jump MCMC (RJ-MCMC) framework~\cite{Green95,waagepetersen2001tutorial}. Let us stress here that the numerical cost of the proposed test is marginal, so that RJPO has nearly the same cost per iteration as T-PO. Finally, we propose an unsupervised tuning of the truncation level allowing to automatically achieve a pre-specified overall acceptance rate or even to minimize the computation cost at a constant effective sample size. The resulting algorithm can be viewed as an adaptive (or controlled) MCMC sampler~\cite{Andrieu01,andrieu2008tutorial,atchade2009adaptive}. 
  
\paragraph{} The rest of the paper is organized as follows: Section~\ref{sec:revjump} introduces the global framework of RJ-MCMC and presents a general scheme to generate Gaussian vectors. Section~\ref{sec:algos} considers a specific application of the previous results, which finally boils down to the proposed RJPO sampler. Section~\ref{sec:TPOvsRJPO} analyses the performance of RJPO compared to T-PO on simple toy problems and presents the adaptive RJPO which incorporates an automatic control of the truncation level. Finally, in section~\ref{sec:appli}, an example of linear inverse problem, the unsupervised image resolution enhancement is presented to illustrate the applicability of the method. These results show the superiority of the RJPO algorithm over the usual Cholesky factorization based approaches in terms of computational cost and memory usage. 
%
%
\section{The reversible jump MCMC framework} \label{sec:revjump}
The sampling procedure consists on  constructing a Markov chain whose distribution asymptotically converges to the target distribution~$P_\Xb(\cdot)$. Let $\xbold \in \eR^{N}$ be the current sample of the Markov chain and $\xbnew$ the new sample obtained according to a transition kernel derived in the reversible jump framework.

\subsection{General framework}  
In the constant dimension case, the Reversible Jump MCMC strategy \cite{Green95,waagepetersen2001tutorial} introduces an auxiliary variable $\zb \in \eR^{L}$, obtained from a distribution $P_\Zb(\zb|\xbold)$ and a deterministic move according to a  differentiable transformation 
\begin{alignat*}{2}
\phib: & \left(\eR^{N} \times \eR^{L}\right)  \mapsto \left (\eR^{N} \times \eR^{L}\right) \\
&  (\xbold, \zb)  \mapsto  (\xb, \sb)
\end{alignat*}
This transformation must also be reversible, that is $\phib(\xb,\sb)=(\xbold, \zb)$. The new sample $\xbnew$ is thereby obtained by submitting $\xb$ (resulting from the deterministic move) to an accept-reject step with an acceptance probability given by
\[
\alpha (\xbold,\xb| \zb)=\min\left(1,\frac{P_\Xb(\xb)P_\Zb(\sb|\xb)}{P_\Xb(\xbold)P_\Zb(\zb|\xbold)}|J_{\phib}(\xbold,\zb)|\right),
\]
with $J_{\phib}(\xbold,\zb)$ the Jacobian determinant of the transformation $\phib$ at $(\xbold,\zb)$.

Actually, the choice of the conditional distribution $P_\Zb(\cdot)$ and the transformation $\phib(\cdot)$ must be adapted to the target distribution $P_\Xb(\cdot)$ and affects the resulting Markov chain properties in terms of correlation and convergence rate. 

\subsection{Gaussian case}To sample from a Gaussian distribution~$\xb\sim\Nc\left(\mub,\Qb\M\right)$, we generalize the scheme adopted in~\cite{deForcrand1999monte}. We set $L=N$ and take an auxiliary variable $\zb \in \eR^{N}$ distributed according to \begin{equation}\label{eq:genauxvar}
P_\Zb(\zb|\xbold)=\Nc\left(\Ab\xbold+\bb, \Bb\right),
\end{equation}
where \Ab, \Bb and \bb denote a $N\times N$ real matrix, a  $N\times N$ real positive definite matrix and a  $N\times 1$ real vector, respectively.
The choice of the latter three quantities will be discussed later. The proposed deterministic move is performed using the transformation $\phib$ such that 
\begin{equation} \label{eq:genmove}
\left(
\begin{array}{c}
   	\xb\\
	\sb
\end{array}
\right)
=
\left(
\begin{array}{c}
   	 \phib_1(\xbold,\zb)\\
	 \phib_2(\xbold,\zb)
\end{array}
\right)
=
\left(
\begin{array}{c}
   	 -\xbold+\fb(\zb)\\
	 \zb
\end{array}
\right),
\end{equation}
with functions $\left(\phib_1: (\eR^{N} \times \eR^{N}) \mapsto \eR^{N}\right)$ and $\left(\phib_2: (\eR^{N} \times \eR^{N}) \mapsto \eR^{N}\right)$ and $\left(\fb : \eR^{N} \mapsto \eR^{N}\right)$.

\begin{Prop}\label{th_1} 
Let an auxiliary variable $\zb$ be obtained according to~\eqref{eq:genauxvar} and a proposed sample $\xb$ resulting from \eqref{eq:genmove}. Then the acceptance probability is 
\beq
\label{eq_alpha}
\alpha(\xbold,\xb | \zb) = \min\left(1,e^{-\rb(\zb)\T\left(\xbold-\xb\right)}\right),
\eeq
with
\begin{equation}
\label{eq:fz2}
\rb(\zb)=\Qb\mub+\Ab\T\Bb^{-1}\left(\zb-\bb\right)-\frac{1}{2}\left(\Qb+\Ab\T\Bb^{-1}\Ab\right)\fb(\zb).
\end{equation}
In particular, the acceptance probability equals one when $\fb(\zb)$ is defined as the exact solution of the linear system
\begin{equation}
\label{eq:fz}
\dfrac{1}{2}\left(\Qb+\Ab\T\Bb^{-1}\Ab\right) \fb(\zb) = \Qb\mub+\Ab\T\Bb^{-1}\left(\zb-\bb\right).
\end{equation}
\end{Prop}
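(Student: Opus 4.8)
The plan is to instantiate the general acceptance probability of the previous subsection with the Gaussian densities \eqref{eq:genauxvar} and the deterministic move \eqref{eq:genmove}, and to show that the resulting log-ratio, though a priori quadratic, collapses to a functional that is linear in $\xbold-\xb$. First I would dispose of the Jacobian factor: differentiating \eqref{eq:genmove} with respect to $(\xbold,\zb)$, the block $\partial\xb/\partial\xbold$ equals minus the identity, $\partial\sb/\partial\zb$ equals the identity, and $\partial\sb/\partial\xbold=0$, so the Jacobian matrix is block triangular and $|J_{\phib}(\xbold,\zb)|=1$ (equivalently, $\phib$ is an involution, which forces a unit-modulus determinant). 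Since moreover $\sb=\zb$, the ratio inside the $\min$ reduces to $P_\Xb(\xb)P_\Zb(\zb|\xb)\big/[P_\Xb(\xbold)P_\Zb(\zb|\xbold)]$.

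Next I would take logarithms. As neither $\Qb$ nor $\Bb$ changes between $\xbold$ and $\xb$, the Gaussian normalizing constants are identical in numerator and denominator and cancel, so the log-ratio is the difference of two quadratic forms,
\begin{multline*}
-\tfrac12\big[(\xb-\mub)\T\Qb(\xb-\mub)-(\xbold-\mub)\T\Qb(\xbold-\mub)\big]\\
-\tfrac12\big[(\zb-\Ab\xb-\bb)\T\Bb^{-1}(\zb-\Ab\xb-\bb)-(\zb-\Ab\xbold-\bb)\T\Bb^{-1}(\zb-\Ab\xbold-\bb)\big].
\end{multline*}
To each bracket I would apply the identity $\ub\T\Mb\ub-\vb\T\Mb\vb=(\ub-\vb)\T\Mb(\ub+\vb)$, valid for any symmetric $\Mb$. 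The key step, and the one that linearizes everything, is that the move gives $\xb+\xbold=\fb(\zb)$: the ``sum'' vector $\ub+\vb$ becomes $\fb(\zb)-2\mub$ in the first bracket and $2(\zb-\bb)-\Ab\fb(\zb)$ in the second, while the ``difference'' vector $\ub-\vb$ equals $\xb-\xbold$ and $\Ab(\xbold-\xb)$ respectively.

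After these substitutions every surviving term carries the factor $(\xbold-\xb)\T$ on the left; collecting it out gives the log-ratio as $-(\xbold-\xb)\T\rb(\zb)=-\rb(\zb)\T(\xbold-\xb)$ with exactly the $\rb(\zb)$ of \eqref{eq:fz2}, which establishes \eqref{eq_alpha}. The exponent then vanishes for all $\xbold,\xb$ if and only if $\rb(\zb)=0$; setting the bracketed expression in \eqref{eq:fz2} to zero and clearing the factor $1/2$ produces the linear system \eqref{eq:fz}, so choosing $\fb(\zb)$ as its exact solution yields $\alpha\equiv1$. I anticipate no genuine obstacle beyond careful bookkeeping of signs and transposes; the one thing not to overlook is that the quadratic terms cancel only because $\xb+\xbold=\fb(\zb)$ is substituted, which is precisely what turns an a priori quadratic log-ratio into a linear functional of $\xbold-\xb$.
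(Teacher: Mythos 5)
Your argument is correct and follows essentially the same route as the paper's own proof in Appendix~\ref{annex_deltaS}: unit Jacobian, the difference of the two Gaussian quadratic forms ($\Delta S_1+\Delta S_2$ in the paper's notation), substitution of $\xb+\xbold=\fb(\zb)$ so that every surviving term carries the factor $(\xbold-\xb)\T$, identification of $\rb(\zb)$, and the observation that exact solving of \eqref{eq:fz} makes the exponent vanish. Your explicit block-triangular computation of the Jacobian and your use of the symmetric-matrix identity for a difference of quadratic forms are merely tidier packagings of the term-by-term expansion the paper carries out.
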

\begin{proof}
See appendix~\ref{annex_deltaS}.
\end{proof}
Let us emphasize that \bb is a dummy parameter, since the residual $\rb(\zb)$ (and thus $\alpha(\xbold,\xb|\zb)$) depends on \bb through $\zb-\bb$ only. However, choosing a specific expression of \bb jointly with \Ab and \Bb will lead to a simplified expression of $\rb(\zb)$ in the next section.

Proposition~\ref{th_1} plays a central role in our proposal. When the exact resolution of \eqref{eq:fz} is numerically costly, it allows to derive a procedure where the resolution is performed only approximately, at the expense of a lowered acceptance probability. The conjugate gradient algorithm stopped before convergence, is a typical example of an efficient tool allowing to approximately solve \eqref{eq:fz}.
\begin{Prop}\label{th_3}
Let an auxiliary variable $\zb$ be obtained according to~\eqref{eq:genauxvar}, a proposed sample $\xb$ resulting from \eqref{eq:genmove} and $\fb(\zb)$ be the exact solution of \eqref{eq:fz}. The correlation between two successive samples is zero if and only if matrices \Ab and \Bb are chosen such that
\beq
\label{ABA}
\Ab\T\Bb\M\Ab=\Qb.
\eeq
\end{Prop}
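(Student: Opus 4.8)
The plan is to show that, when \eqref{eq:fz} is solved exactly, the sampler reduces to a linear autoregression in $\xbold$, and then to read off the lag-one cross-covariance directly.

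First I would invoke Proposition~\ref{th_1}: since $\fb(\zb)$ solves \eqref{eq:fz} exactly, the residual $\rb(\zb)$ vanishes, the acceptance probability equals one, and every proposal is accepted, so that $\xbnew = -\xbold + \fb(\zb)$ deterministically. Writing $\Mb = \frac{1}{2}\left(\Qb + \Ab\T\Bb\M\Ab\right)$, the solution of \eqref{eq:fz} reads $\fb(\zb) = \Mb\M\left(\Qb\mub + \Ab\T\Bb\M(\zb-\bb)\right)$, while drawing $\zb$ from \eqref{eq:genauxvar} amounts to setting $\zb-\bb = \Ab\xbold + \wb$ with $\wb\sim\Nc\left(\mathbf{0},\Bb\right)$ independent of $\xbold$. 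Substituting both expressions yields
\begin{equation*}
\xbnew = \left(-\Ib + \Mb\M\Ab\T\Bb\M\Ab\right)\xbold + \Mb\M\Qb\mub + \Mb\M\Ab\T\Bb\M\wb .
\end{equation*}

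Next I would simplify the transition matrix. From the definition of $\Mb$ one has $\Ab\T\Bb\M\Ab = 2\Mb - \Qb$, so the coefficient of $\xbold$ collapses to $-\Ib + \Mb\M(2\Mb - \Qb) = \Ib - \Mb\M\Qb =: \Tb$. At stationarity $\xbold\sim\Nc\left(\mub,\Qb\M\right)$, and since $\wb$ and the constant term are independent of $\xbold$, the lag-one cross-covariance is
\begin{equation*}
\mathrm{Cov}\left(\xbnew,\xbold\right) = \Tb\,\Qb\M = \left(\Ib - \Mb\M\Qb\right)\Qb\M = \Qb\M - \Mb\M .
\end{equation*}

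Finally, the two successive samples are uncorrelated precisely when this matrix vanishes, i.e. $\Mb\M = \Qb\M$, equivalently $\Mb = \Qb$; substituting the definition of $\Mb$ turns this into $\frac{1}{2}\left(\Qb + \Ab\T\Bb\M\Ab\right) = \Qb$, that is $\Ab\T\Bb\M\Ab = \Qb$, which is \eqref{ABA}. Every step is an equivalence, so both implications follow at once; in fact \eqref{ABA} forces $\Tb = \mathbf{0}$, so $\xbnew$ becomes fully independent of $\xbold$. The only points requiring care are the substitution in the first paragraph and the algebraic collapse $\Ab\T\Bb\M\Ab = 2\Mb - \Qb$; the positive definiteness of $\Bb$ and $\Qb$ guarantees that all the inverses involved are well defined.
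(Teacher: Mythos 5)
Your proof is correct and follows essentially the same route as the paper's Appendix~\ref{annex_corr}: both express $\xbnew$ as a linear autoregression in $\xbold$ plus noise independent of $\xbold$, compute the lag-one covariance as (transition matrix)$\,\times\,\Qb\M$, and read off condition~\eqref{ABA}. Your extra simplification of the transition matrix to $\Ib-\Mb\M\Qb$, giving the covariance in the clean form $\Qb\M-\Mb\M$, is just a tidier packaging of the paper's expression $\left(2\left(\Qb+\Ab\T\Bb\M\Ab\right)\M\Ab\T\Bb\M\Ab-\Ib\right)\Qb\M$.
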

\begin{proof}
See Appendix~\ref{annex_corr}.
\end{proof}

Many couples $(\Ab,\Bb)$ fulfill condition~\eqref{ABA}
\begin{itemize}
\item Consider the Cholesky factorization $\Qb=\Cb_q\Cb_q\T$ and take $\Ab=\Cb_q\T$, $\Bb=\Ib$. It leads to $\zb=\Cb_q\T\xbold + \bb+\omegab$ with $\omegab\sim\Nc\left(\zerob,\Ib_{N}\right)$. According to \eqref{eq:fz}, the next sample $\xbnew = - \xbold + \fb(\zb)$, will be obtained as 
\begin{align*}
\xbnew &=-\xbold+\left(\Cb_q\Cb_q\T\right)\M\left(\Qb\mub+\Cb_q(\zb-\bb)\right), \\
&=\left(\Cb_q\T\right)\M \left(\Cb_q\M\Qb\mub+\omegab\right).
\end{align*}
Such an update scheme is exactly the same as the one proposed by Rue in~\cite{rue2001fast}. 
\item The particular configuration
\beq
\label{choice}
\Ab=\Bb=\Qb\quad \text{and}\quad \bb=\Qb \mub.
\eeq
is retained in the sequel, since: 
\bit
\item[$i)$]~$\Ab\T\Bb\M\Ab=\Qb$ is a condition of Proposition~\ref{th_3}, 
\item[$ii)$]~$\bb=\Qb\mub$ simplifies equation~\eqref{eq:fz} to a linear system $\Qb \fb(\zb)=\zb$. 
\eit
In particular, it allows to make a clear connection between our RJ-MCMC approach and the E-PO algorithm in the case of an exact resolution of the linear system.
\end{itemize}
%
%
\section{Gaussian Sampling in the Reversible Jump MCMC framework} \label{sec:algos}

The resulting algorithm for the sampling of a Gaussian distribution in the RJMCMC framework is related to the sampling of an auxiliary variable according to~\eqref{eq:genauxvar} and the resolution of the linear system~\eqref{eq:fz}.  

\subsection{Sampling the Auxiliary Variable}\label{sec:eta}

According to \eqref{choice}, the auxiliary variable $\zb$ is distributed according to $\Nc\left(\Qb\xbold+\Qb\mub,\Qb\right)$. It can then be expressed as $\zb=\Qb\xbold+\etab$, $\etab$ being distributed according to $\Nc\left(\Qb\mub, \Qb\right)$. Consequently, the auxiliary variable sampling step is reduced to the simulation of $\etab$, which is the perturbation step in the PO algorithm. 

In~\cite{papandreou2010gaussian,orieux2012sampling}, a subtle way of sampling $\etab$ is proposed. It consists in exploiting equation \eqref{eq_Qmu} and \emph{perturbing} each factor separately:
\begin{enumerate}
\item Sample $\etab_y\sim\Nc\left(\yb-\mub_y, \Rb_y\right)$,
\item Sample $\etab_x\sim\Nc\left(\mub_x, \Rb_x\right)$,
\item Set $\etab=\Hb\T\Rb_y\M\etab_y+\Rb_x\M\etab_x$, a sample of $\Nc\left(\Qb\mub,\Qb\right)$.
\end{enumerate}
It is important to notice that such a tricky method is interesting since matrices $\Rb_y$ and $\Rb_x$ have often a simple structure if not diagonal. 
 
We emphasize that this perturbation step can be applied more generally for the sampling of any Gaussian distribution, for which a factored expression of the precision matrix $\Qb$ is available under the form~$\Qb=\Fb\T \Fb$, with matrix $\Fb\in \eR^{N'\times N}$. In such a case, $\etab=\Qb\mub + \Fb\T \wb$, where $\omegab\sim\Nc\left(\zerob,\Ib_{N'}\right)$.
 
\subsection{Exact resolution case}
As stated by proposition~\ref{th_1}, the exact resolution of system~\eqref{eq:fz} implies an acceptance probability of one. The resulting sampling procedure is thus based on the following steps:
\begin{enumerate}
\item Sample $\etab\sim\Nc\left(\Qb\mub, \Qb\right)$,
\item Set $\zb=\Qb\xbold+\etab$,
\item Take $\xbnew=-\xbold+\Qb^{-1}\zb$.
\end{enumerate}

Let us remark that $\xbnew=-\xbold+\Qb\M(\Qb\xbold+\etab)=\Qb\M\etab$, so the handling of variable \zb can be skipped and Steps 2 and 3 can be merged to an equivalent but more direct step: 
\begin{enumerate}
\stepcounter{enumi}
\item Set $\xbnew=\Qb^{-1}\etab$.
\end{enumerate}
In the exact resolution case, the obtained algorithm is thus identical to the E-PO algorithm~\cite{orieux2012sampling}. According to Proposition~\ref{th_3}, E-PO enjoys the property that each sample is totally independent from the previous ones. However, a drawback is that the exact resolution of the linear system $\Qb\xb=\etab$ often leads to an excessive numerical complexity and memory usage in high dimensions~\cite{papandreou2010gaussian}. In practice, early stopping of an iterative solver such as the linear conjugate gradient algorithm is used, yielding the Truncated Perturbation Optimization (T-PO) version. The main point is that, up to our knowledge, there is no theoretical analysis of the efficiency of T-PO and of its convergence to the target distribution. Indeed, the simulation tests provided in Section~\ref{sec:TPOvsRJPO} indicate that convergence to the target distribution is not guaranteed. As shown in the next subsection, two slight but decisive modifications of T-PO lead us to the RJPO version, which is a provably convergent algorithm.

\subsection{Approximate resolution case}
In the case \eqref{choice}, equation~\eqref{eq:fz2} reduces to
\begin{equation} \label{eq:fz3}
\rb(\zb)=\zb-\Qb\fb(\zb).
\end{equation}
Therefore, a first version of the RJPO algorithm is as follows:
\begin{enumerate}
\item Sample $\etab\sim\Nc\left(\Qb\mub, \Qb\right)$,
\item \label{steplinsolv} Set $\zb=\Qb\xbold+\etab$. Solve the linear system $\Qb\ub=\zb$, in an approximate way. Let $\widehat \ub$ denote the obtained solution, $\rb(\zb)=\zb-\Qb\ub$ and propose $\widehat \xb = -\xbold + \widehat \ub$,
\item \label{stepaccept} With probability $\min\left(1,e^{-\rb(\zb)\T(\xbold-\widehat \xb)}\right)$, set $\xbnew=\widehat \xb$, otherwise set $\xbnew=\xbold$.
\end{enumerate}
An important point concerns the initialization of the linear solver in Step~\ref{steplinsolv}: in the case of an early stopping, the computed approximate solution may depend on the initial point $\ub_0$. On the other hand, $\fb(\zb)$ must not depend on $\xbold$, otherwise the reversibility of the deterministic move \eqref{eq:genmove} would not be ensured. Hence, the initial point $\ub_0$ must not depend on $\xbold$ either. In the rest of the paper, $\ub_0=\zerob$ is the default choice.

\paragraph{} A more compact and direct version of the sampler can be obtained by substituting $\xb=\fb(\zb)-\xbold$ in equation \eqref{eq:fz3}. The latter reduces to the solving of the system $\Qb\xb=\etab$. Step~\ref{steplinsolv} of the RJPO algorithm is then simplified to:
\begin{enumerate}
\stepcounter{enumi}
\item Solve the linear system $\Qb\xb=\etab$ in an approximate way. Let $\widehat{\xb}$ denote the obtained solution and $\rb(\zb)=\etab-\Qb\widehat \xb$.
\end{enumerate}
For the reason just discussed above, the initial point $\xb_0$ of the linear solver must be such that $\ub_0=\xb_0+\xbold$ does not depend on $\xbold$. Hence, as counterintuitive as it may be, choices such as $\xb_0=\zerob$ or $\xb_0=\xbold$ are not allowed, while $\xb_0=-\xbold$ is the default choice corresponding to $\ub_0=\zerob$.

It is remarkable that both T-PO and the proposed algorithm (RJPO) rely on the approximate resolution of the same linear system $\Qb\xb=\etab$. However, RJPO algorithm incorporates two additional ingredients that make the difference in terms of mathematical validity:
\bit
\item RJPO relies on an accept-reject strategy to ensure the sampler convergence in the case of an approximate system solving,
\item There is a constraint on the initial point $\xb_0$ of the linear system solving: $\xb_0+\xbold$ must not depend on $\xbold$.
\eit

\subsection{Implementation issues}
There is no constraint on the choice of the linear solver, nor on the early stopping rule, except that they must not depend on the value of $\xbold$. Indeed, any linear system solver, or any quadratic programming method could be employed. In the sequel, we have adopted the linear conjugate gradient algorithm for two reasons:
\bit
\item Early stopping (\ie \emph{truncating}) the conjugate gradient iterations is a very usual procedure to approximately solve a linear system, with well-known convergence properties towards the exact solution \cite{Barrett94}. Moreover, a preconditioned conjugate gradient could well be used to accelerate the convergence speed.
\item It lends itself to a matrix-free implementation with reduced memory requirements, as far as matrix-vector products involving matrix \Qb can be performed without explictly manipulating such a matrix.
\eit
On the other hand, we have selected a usual stopping rule based on a threshold on the relative residual norm:
\beq
\label{RRN}
\epsilon = \dfrac{\|\etab - \Qb \xb\|_{2}}{\|\etab\|_{2}}.
\eeq
%
%
\section{Performance analysis}
\label{sec:TPOvsRJPO}
The aim of this section is to analyze the performance of the RJPO algorithm and to discuss the influence of the relative residual norm (and hence, the truncation level of the conjugate gradient (CG) algorithm) on the performances of the proposed RJPO algorithm. A second part is dedicated to the analysis of the optimal choice of the relative residual norm. 

\subsection{Incidence of the approximate resolution} 
Let us consider a Gaussian distribution with a precision matrix $\Qb=\Rb\M$ and a mean vector $\mub$ defined by 
 \begin{gather*}
R_{ij} = \sigma^{2}\rho^{|i-j|}, \quad (\forall i, j=1,\ldots, N) \\
\mu_i \sim \Uc[0, 10],   \quad (\forall i,\ldots, N) 
\end{gather*}
with $N=20$, $\sigma^{2}=1$ and $\rho=0.8$. 

\paragraph{} Figure \ref{fig:tpoillust} shows the distribution of $K=5000$ samples obtained by running the TPO algorithm for different truncation levels of a conjugate gradient algorithm. It can be noted that en early stopping, with $J<5$, leads to a sample distribution different from the target one. One can also notice that the related acceptance probability expressed in the RJMCMC framework suggests at least $J=6$ iterations to get samples with a nonzero acceptance probability. One can also that an exact resolution is not needed since the acceptance probability is almost equal to one after $J>10$ iterations. This result allows to conclude that the idea of truncating the system resolution is relevant, since it allows to avoid unnecessary calculations, but an acceptation-reject step must be added to ensure a correct behavior of the sampler.
\begin{figure}[h!]
\centering
\subfigure[$J=4$ iterations]{\includegraphics[width=4cm]{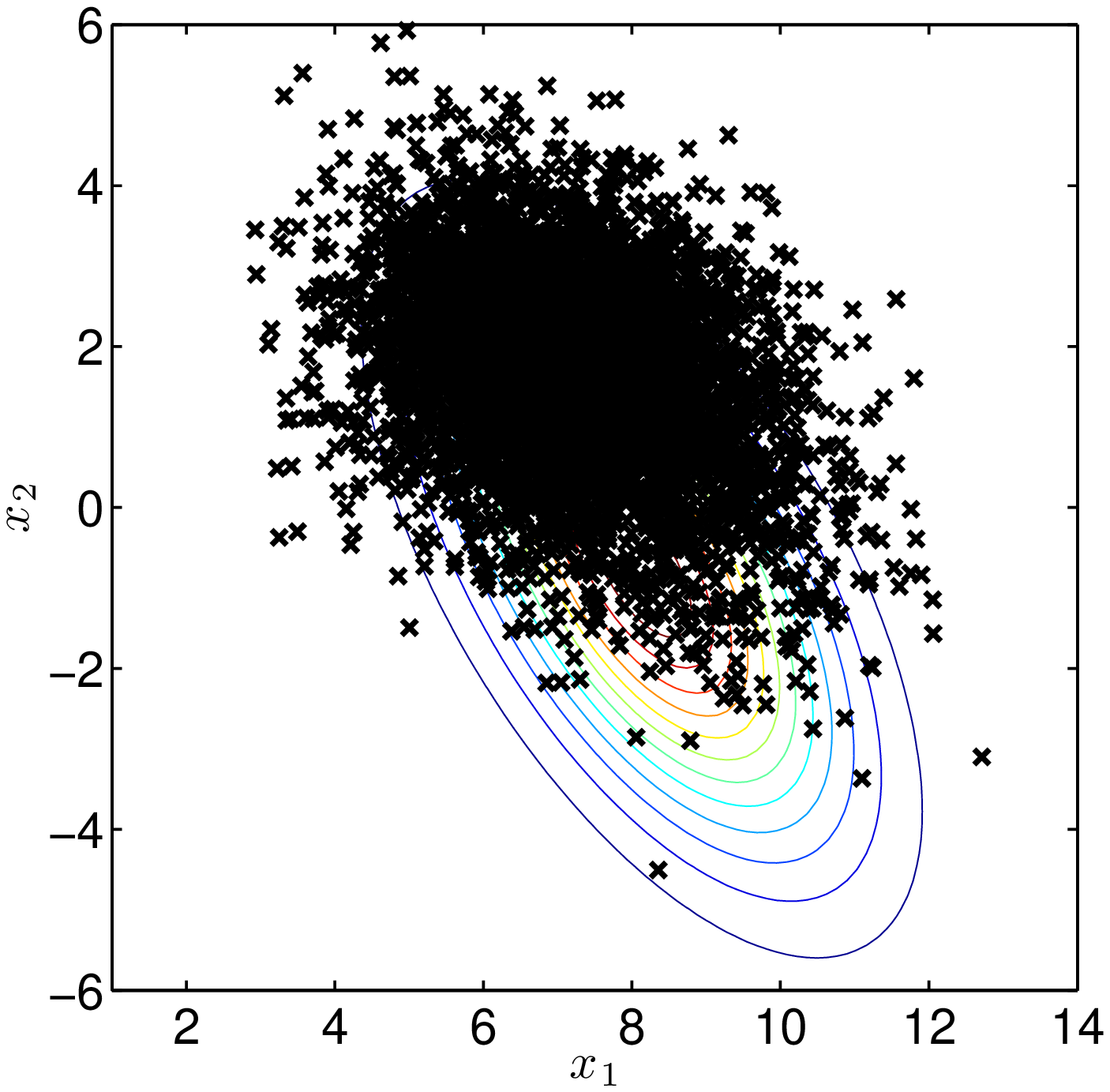}} 
\subfigure[$J=10$ iterations]{\includegraphics[width=4cm]{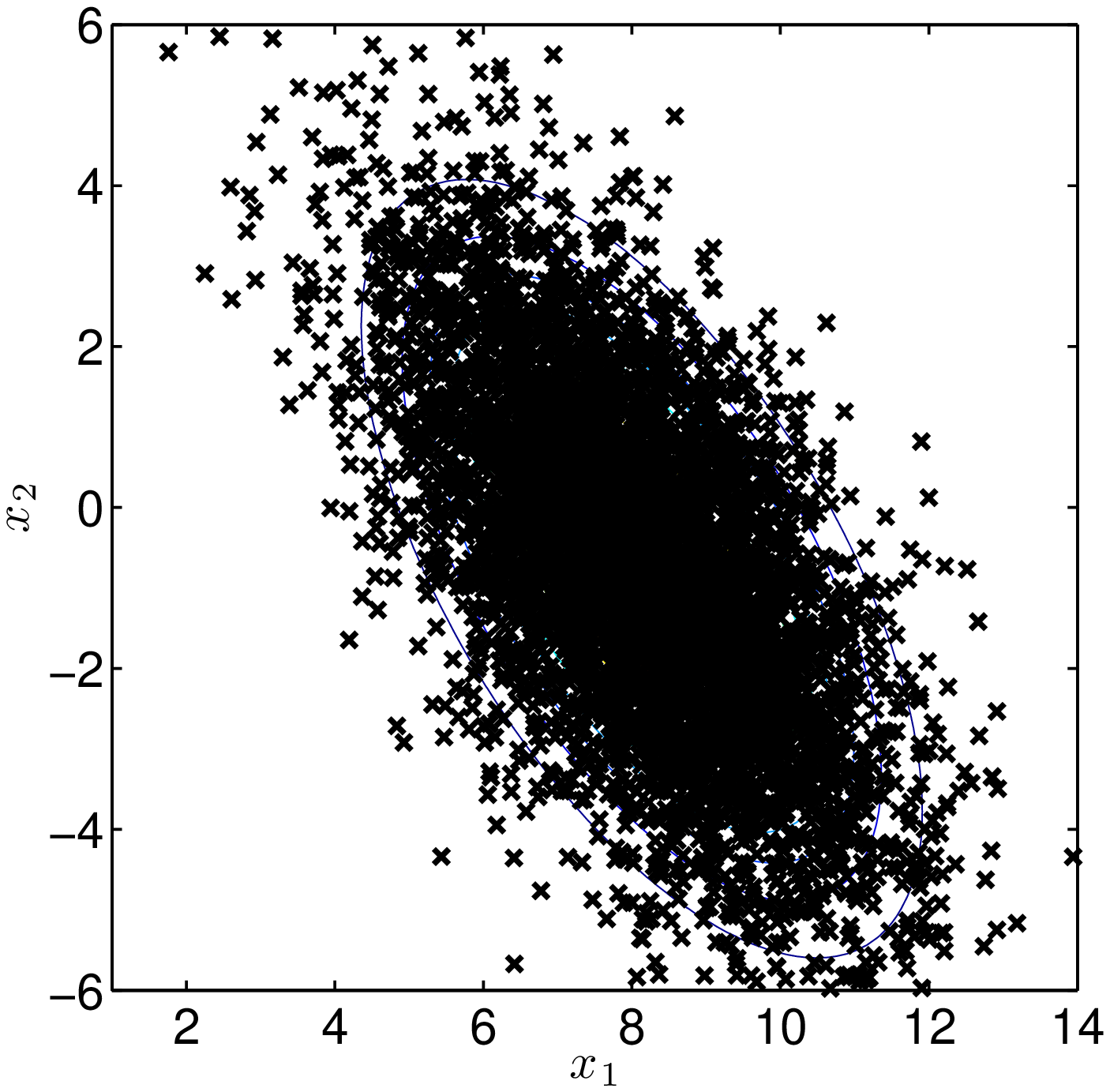}}
\subfigure[Sample mean]{\includegraphics[width=4cm]{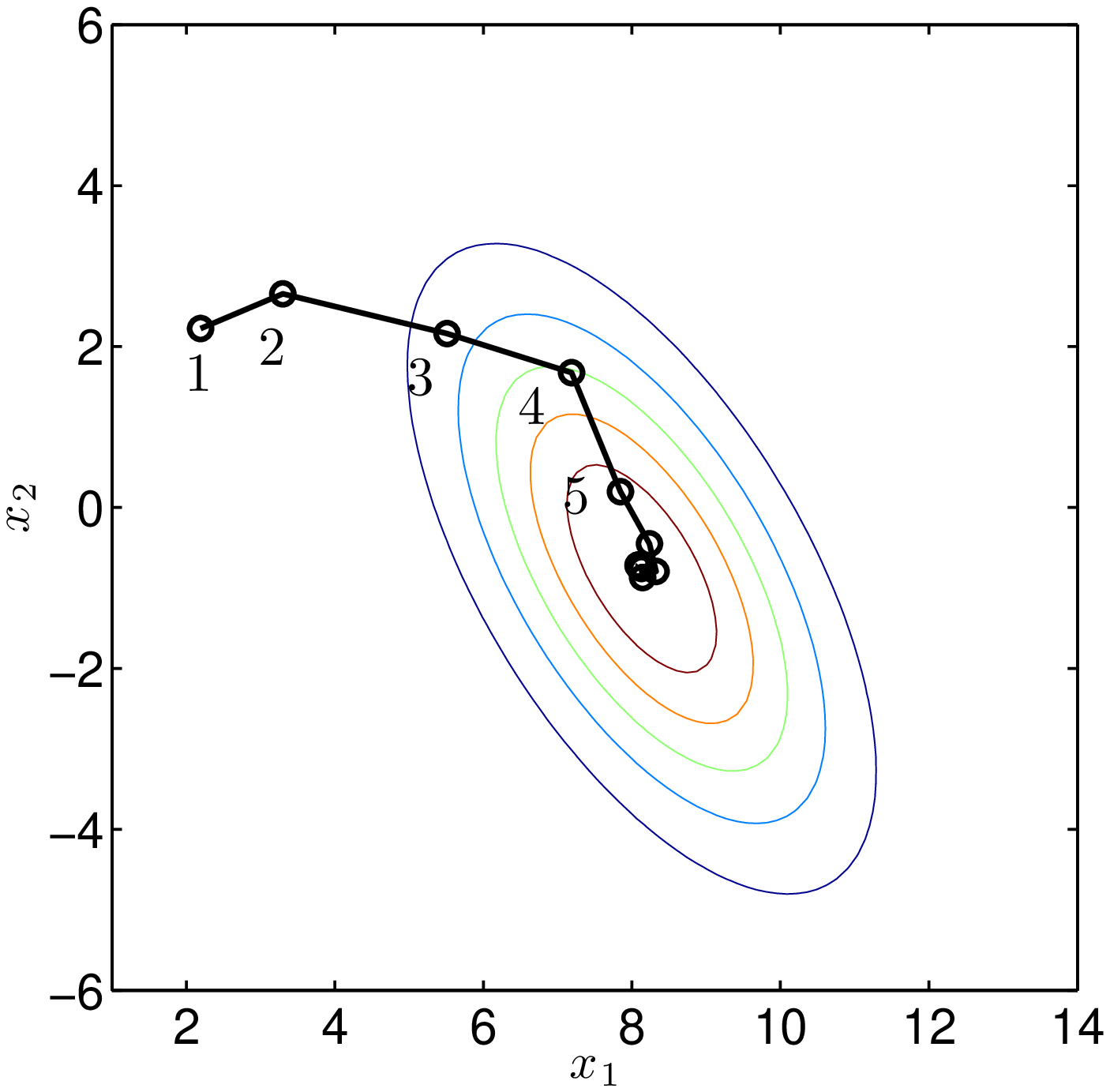}} 
\subfigure[Acceptance rate\label{fig:tpo-d}]{\includegraphics[width=4.1cm]{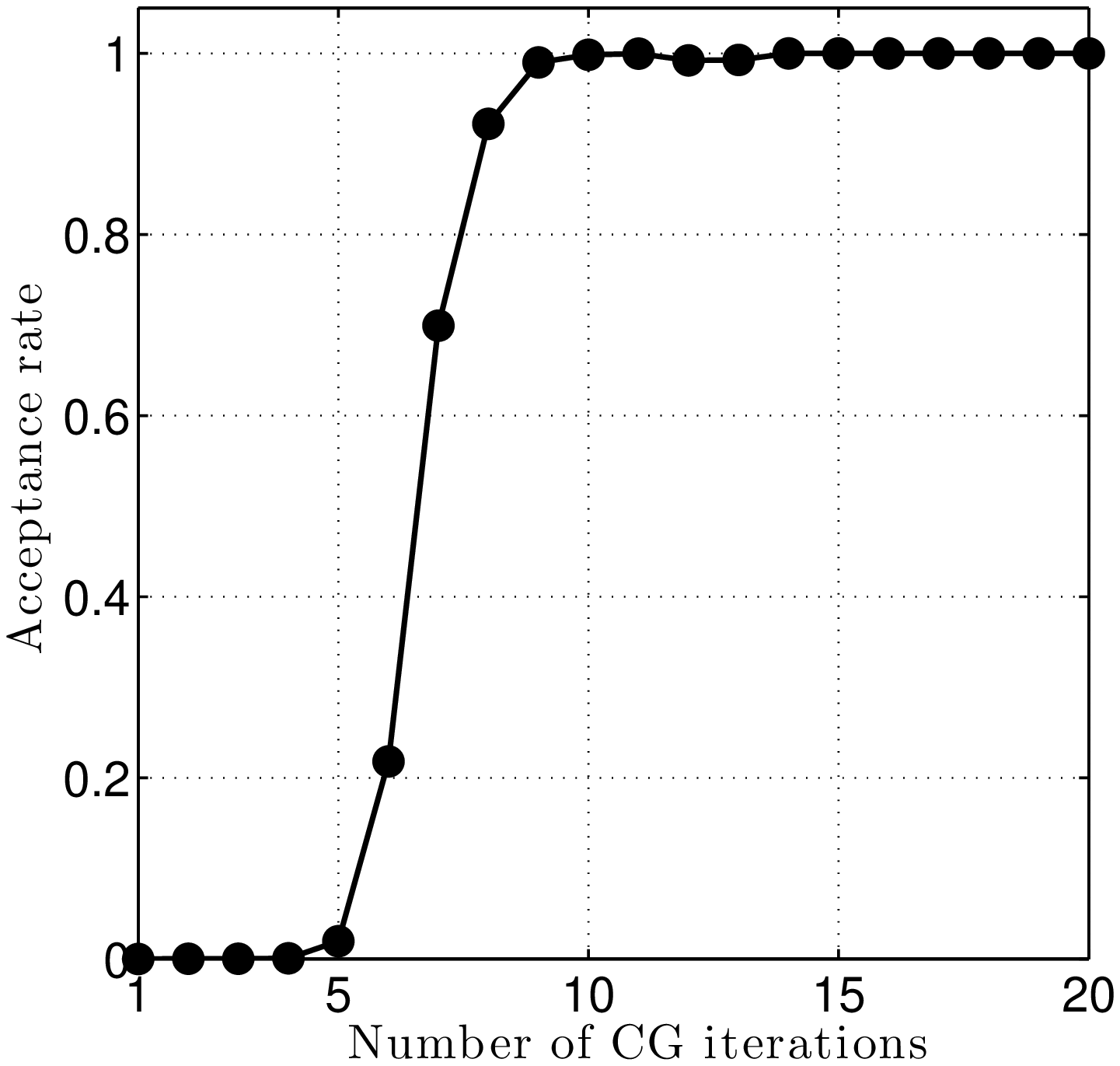}}
\caption{Influence of the truncation level on the distribution of $K=10^{5}$ samples obtained by the TPO algorithm.}
\label{fig:tpoillust}
\end{figure}

\subsection{Acceptance rate} 
We focus in this experiment on a small size problem ($N=16$) to discuss the influence of the truncation level on the numerical performance in terms of acceptance rate and estimation error. For the retained Gaussian sampling schemes, both RJPO and T-PO are run for a number of CG iterations allowing to reach a predefined value of relative residual norm~\eqref{RRN}. We also discuss the influence of the problem dimension on the best value of the truncation level leading to a minimal total number of CG iterations before convergence. 

\figref{fig:TvsRJ:alpha} illustrates the average acceptance probability obtained over $n_{\max}=10^{5}$  iterations of the RJPO sampler for different relative residual norm values. It can be noted that the acceptance rate is almost zero when the relative residual \cor{norm} is larger than $10^{-2}$ and monotonically increases for higher resolution accuracies. Moreover, a relative residual norm lower than $10^{-5}$ leads to an acceptance probability almost equal to one. Such a curve indicates that the stopping criterion of the CG must be chosen carefully in order to run the RJPO algorithm efficiently and to get non-zero acceptance probabilities. Finally, note that this curve mainly depends on the condition number of the precision matrix $\Qb$. Even if the shape of the acceptance curve stays the same for different problems, it happens to be difficult to determine the value of the relative residual norm that corresponds to a given acceptance rate.
\begin{figure}[h!]
     \centering
     \includegraphics[width=8cm,height=4.5cm]{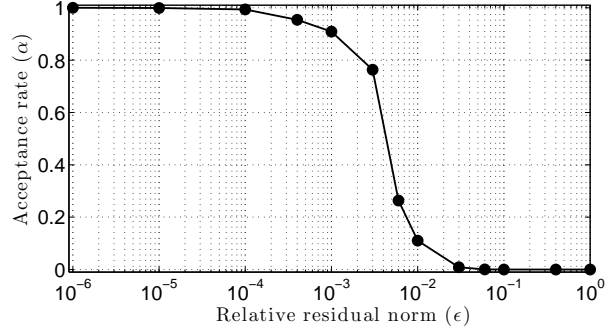}
    \caption{Acceptance rate of the RJPO algorithm for different values of the relative residual norm in a small size problem ($N=16$).}
    \label{fig:TvsRJ:alpha}
\end{figure}

\subsection{Estimation error} 
The estimation error is assessed as the relative mean square error (RMSE) on the estimated mean vector and covariance matrix using the Markov chain samples
\begin{equation}
\textrm{RMSE}({\mub}) = \dfrac{\|\mub - \hat \mub\|_{2}}{\|\mub\|_{2}} \quad \text{ and } \quad 
 \textrm{RMSE}({\Rb}) = \dfrac{\|\Rb - \hat \Rb\|_{F}}{\|\Rb\|_{F}}
\end{equation}
where $\|\cdot\|_{F}$ and $\|\cdot\|_{2}$ represent the Frobenius and the $\ell_2$ norms, respectively. $\mub$, $\Rb$, $\hat \mub$, and $\hat \Rb$ are respectively the mean and the covariance matrix of the Gaussian vector, and their empirical estimates using the generated Markov chain samples according to 
\begin{equation*}
\begin{cases}
\hat \mub = \dfrac{1}{n_{\max}-n_{\min}+1} \sum\limits_{n=n_{\min}}^{n_{\max}} \xb_{n} \\
\hat \Rb = \dfrac{1}{n_{\max}-n_{\min}} \sum\limits_{n=n_{\min}}^{n_{\max}} (\xb_{n}-\hat \mub)(\xb_{n}-\hat \mub)\T
\end{cases}
\end{equation*}
with $n_{\min}$ iterations of burn-in and $n_{\max}$ total iterations. 

\paragraph{} As expected, \figref{fig:TvsRJ:error} indicates that the estimation error is very high if the acceptance rate is zero (when the relative residual norm is lower than $10^{-2}$), even for RJPO after $n_{\max}=10^5$ iterations. This is due to the very low acceptance rate which slows down the chain convergence. However, as soon as new samples are accepted, RJPO leads to the same performance as when the system is solved exactly (E-PO algorithm). On the other hand, T-PO keeps a significant error for small and moderate resolution accuracies. Naturally, both methods present similar performance when the relative residual norm is very low since these methods tend to provide almost the same samples with an acceptance probability equal to one. This experimental result clearly highlights the deficiency of T-PO: the system must be solved with a relatively high accuracy to avoid an important estimation error. On the other hand, in the RJPO algorithm the acceptance rate is a good indicator whether the value of the relative residual norm threshold is appropriate to ensure a sufficient mixing of the chain.
\begin{figure}[h!]
    \centering
    \subfigure[Mean vector estimation]{\includegraphics[width=8cm,height=4.5cm]{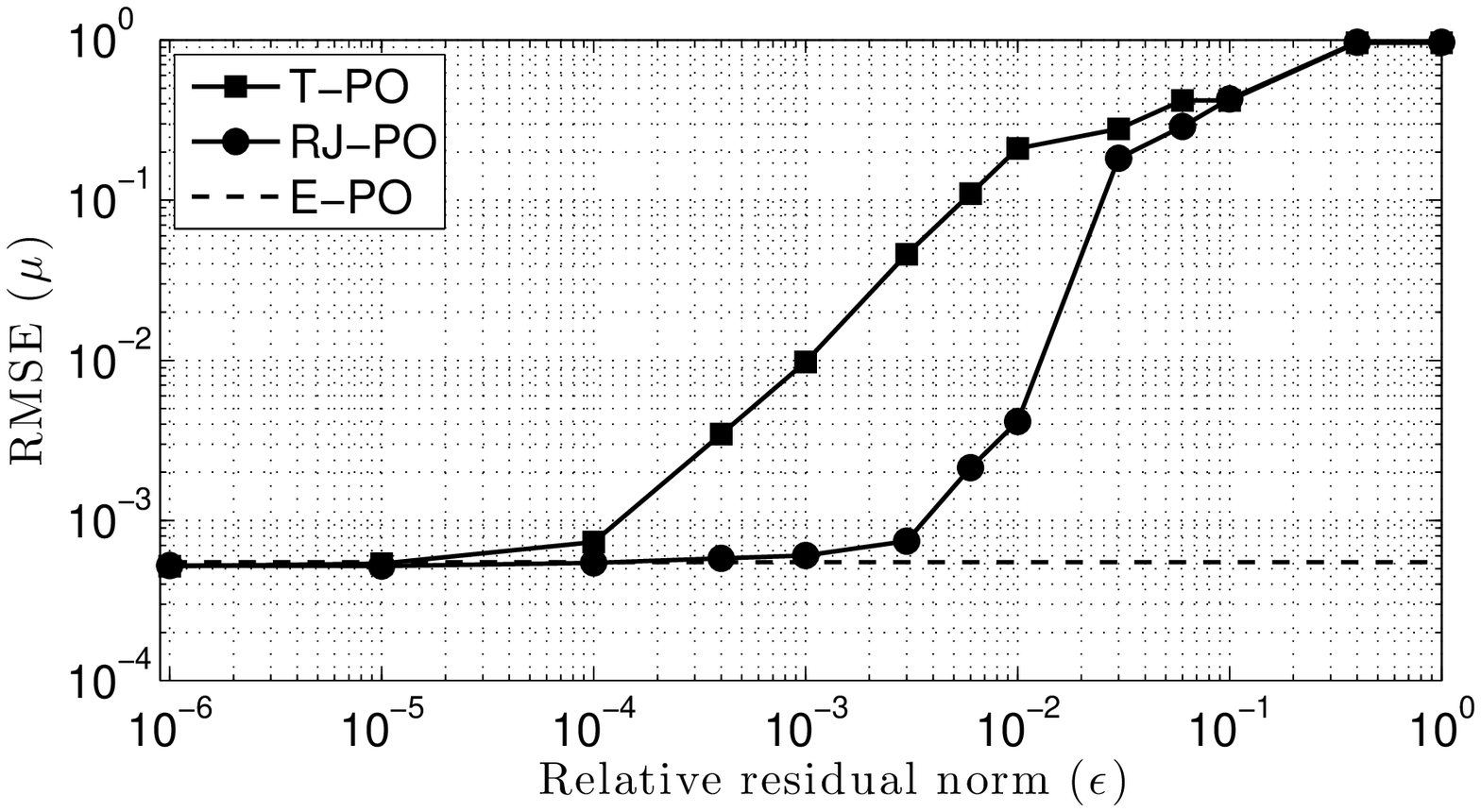}}
    \subfigure[Mean vector estimation]{\includegraphics[width=8cm,height=4.5cm]{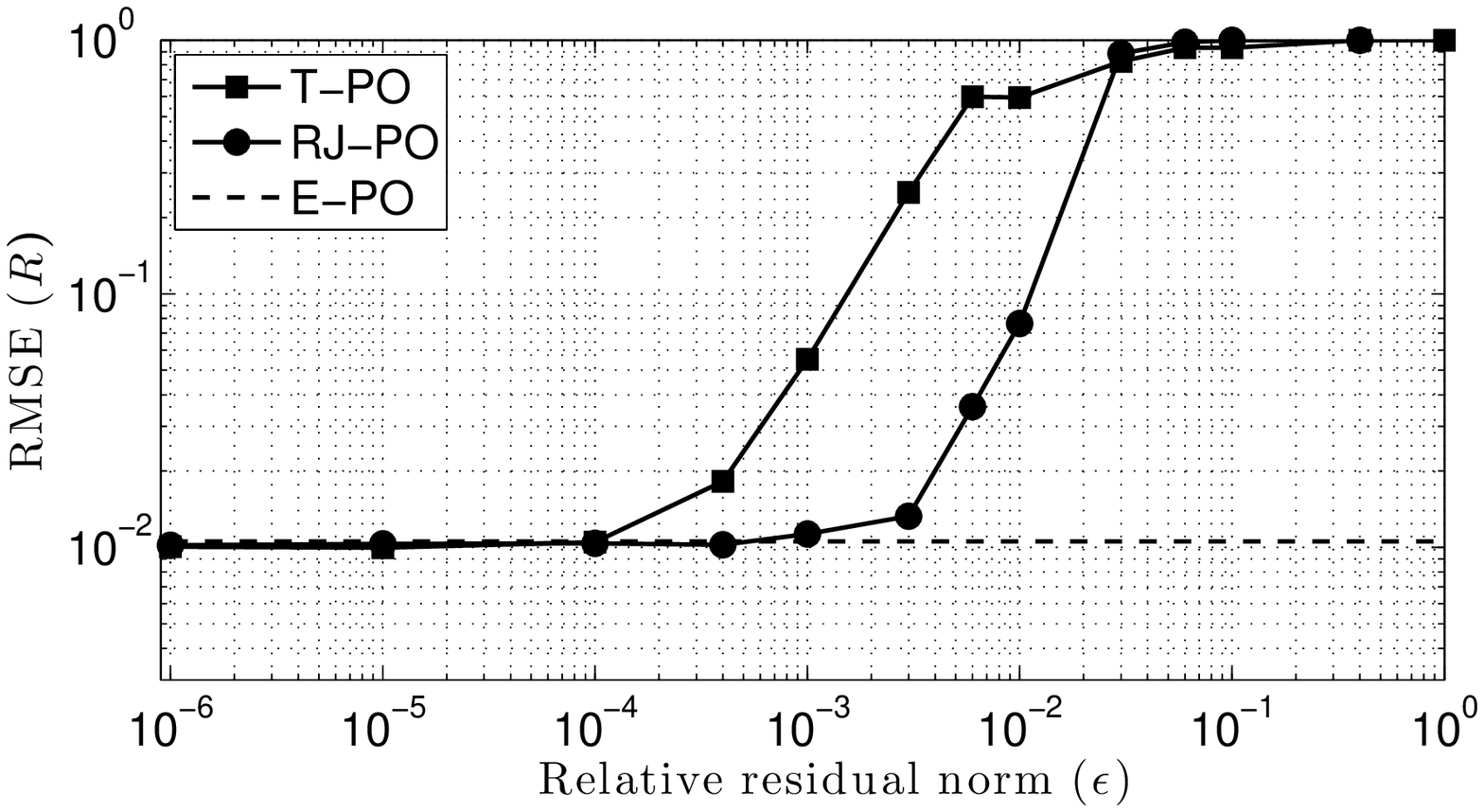}} 
    \caption{Estimation error for different values of the truncation level after $n_{\max}=10^5$ iterations of E-PO, T-PO and RJPO algorithms: (a) mean vector, (b) covariance matrix.
    \label{fig:TvsRJ:error}}
\end{figure}

\subsection{Computation cost} Since the CG iterations correspond to the only burdensome task, the numerical complexity of the sampler can be expressed in terms of the total number $J_{\textrm{tot}}$ of CG iterations to be performed before convergence and the number of required samples to get efficient empirical approximation of the estimators. 

To assess the Markov chain convergence, we first use the Gelman-Rubin criterion based on multiple chains~\cite{gelman1992inference}, which consists in computing a scale reduction factor based on the between and within-chain variances. In this experiment 100 parallel chains are considered. The results are summarized in \figref{fig:TvsRJ:CGiter}. It can be noted that a lower acceptance rate induces a higher number of iterations since the Markov chain converges more slowly towards its stationary distribution. One can also see that a minimal cost can be reached and, according to \figref{fig:TvsRJ:alpha}, it corresponds to an acceptance rate of almost one. As the acceptance rate decreases, even a little, the computational cost rises very quickly. Conversely, if the relative residual is too small, the computation effort per sample will decrease but additional sampling iterations will be needed before convergence, which naturally increases the overall computation cost. The latter result points out the need to appropriately choose the truncation level to jointly avoid a low acceptance probability and a high resolution accuracy of the linear system since both induce unnecessary additional computations. 
\begin{figure}[h!]
\centering
          \includegraphics[width=8.5cm]{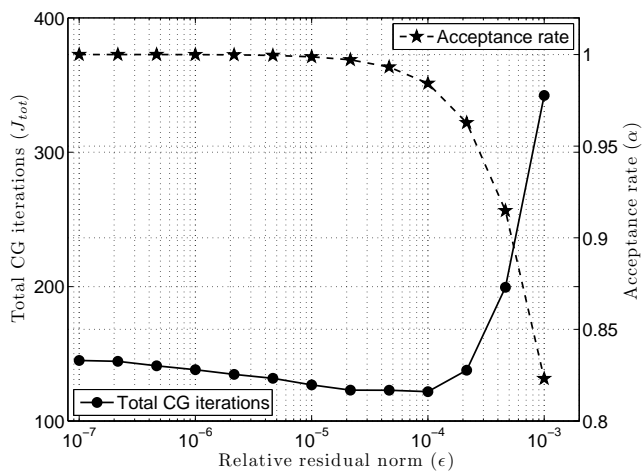}
          \caption{Number of CG iterations before convergence and acceptance probability of the RJPO algorithm for different values of relative residual norm for a small size problem ($N=16$).}
    \label{fig:TvsRJ:CGiter}
\end{figure}

\subsection{Statistical efficiency} 
The performance of the RJPO sampler can also be analyzed using the effective sample size (ESS)~\cite[p. 125]{Liu01}. This indicator gives the number of independent samples, $n_{\textrm{eff}}$, that would yield the same estimation variance of approximating the Bayesian estimator as $n_{\max}$ successive samples of the simulated chain~\cite{Goodman89}. It is related to the chain autocorrelation function according to
\beq
\label{neff}
n_{\textrm{eff}} = \dfrac{n_{\max}}{1 + 2 \sum\limits_{k=1}^{\infty} \rho_{k}}
\eeq
where $\rho_{k}$ the autocorrelation coefficient at lag $k$. In the Gaussian sampling context, such a relation allows to define how many iterations $n_{\max}$ are needed for each resolution accuracy to get chains having the same effective sample size. Under the hypothesis of a first-order autoregressive chain, $\rho_{k} = \rho^{k}$, so \eqref{neff} leads to the ESS ratio
\beq
\label{ESSR}
\textrm{ESSR} = \dfrac{n_{\textrm{eff}}}{n_{\max}} = \dfrac{1 - \rho}{1 +  \rho}.
\eeq
It can be noted that the ESSR is equal to one when the samples are independent $(\rho = 0)$ and decreases as the correlation between successive samples grows. In the RJPO case, we propose to define the \emph{computing cost per effective sample} (CCES) as
\beq
\label{CCES}
\textrm{CCES} = \dfrac{J_{\textrm{tot}}}{n_{\textrm{eff}}} =\dfrac{J}{\textrm{ESSR}}
\eeq
where $J=J_{\textrm{tot}}/n_{\max}$ is the average number of CG iterations per sample. \figref{fig:rjpoESS} shows the ESSR and the CCES in the case of a Gaussian vector of dimension $N=16$. It can be seen that an early stopped CG algorithm induces a very small ESSR, due to a large sample correlation value, and thus a high effective cost to produce accurate estimates. On the contrary, a very precise resolution of the linear system induces a larger number of CG iterations per sample but a shorter Markov chain since the ESSR is almost equal to 1. The best trade-off is produced by intermediate values of the relative residual
norm around $\epsilon=2\cdot10^{-4}$.
\begin{figure}[h!]
    \centering
     \includegraphics[width=8.5cm]{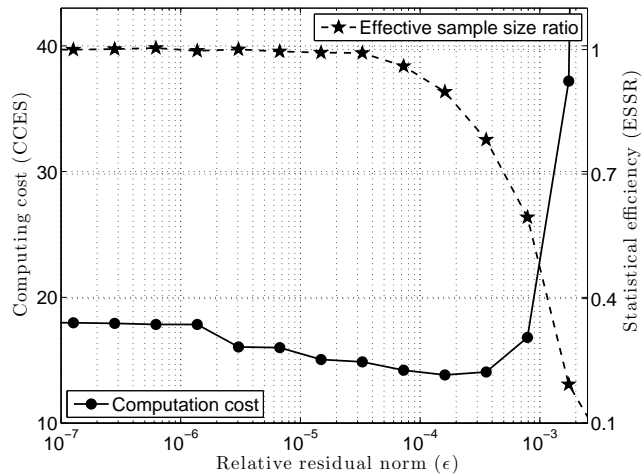} \\
    \caption{Computing cost per effective sample of the RJPO algorithm for different relative residual norm values on a small size problem $(N=16)$ estimated from $n_{\max}=10^4$ samples. }
    \label{fig:rjpoESS}
\end{figure}

To conclude, the Gelman-Rubin convergence diagnostic and the ESS approach both confirm that the computation cost of the RJPO can be reduced by appropriately truncating the CG iterations. Although the Gelman-Rubin convergence test is probably more accurate, since it is based on several independent chains, the CCES based test is far simpler and provides nearly the same trade-off in the tested example. Such results motivate the development of an adaptive strategy to automatically adjust the threshold parameter $\epsilon$ by tracking the minimizer of the CCES. The proposed strategy is presented in Subsection \ref{sec:adrjpo}.
   
\subsubsection{Influence of the dimension} \figref{fig:optimrjpo} summarizes the optimal values of the truncation level $\epsilon$ that allows to minimize the CCES for different values of $N$. The best trade-off is reached for decreasing values of $\epsilon$ as $N$ grows. More generally, the same observation can be made as the problem conditioning deteriorates. In practice, predicting the appropriate truncation level for a given problem is difficult. Fortunately, \figref{fig:optimrjpo} also indicates that the optimal setting is obtained for an acceptance probability that remains almost constant. The best trade-off is clearly obtained for an acceptance rate $\alpha$ lower than one ($\alpha=1$ corresponds to $\epsilon=0$, \ie to the exact solving of $\Qb\xb=\etab$). In the tested example, the optimal truncation level $\epsilon$ rather corresponds to an acceptance rate around $0.99$. However, finding an explicit mathematical correspondence between $\epsilon$ and $\alpha$ is not a simple task. In the next subsection, we propose an unsupervised tuning strategy of the relative residual norm allowing either to achieve a predefined target acceptance rate, or even to directly optimize the computing cost per effective sample.
\begin{figure}[h!]
    \centering
      \includegraphics[width=8.5cm]{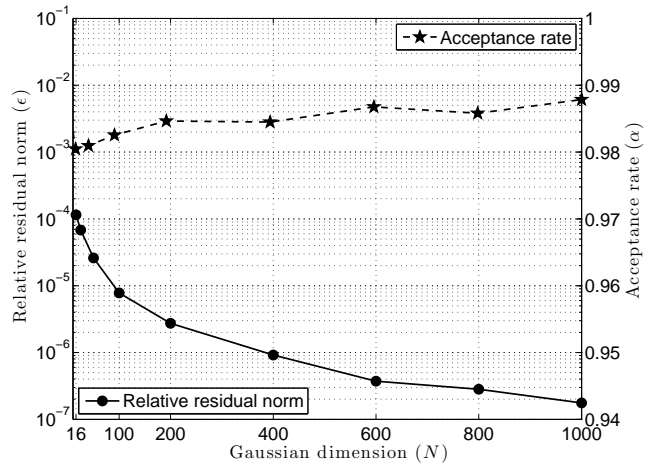} \\
    \caption{Influence of the problem dimension on the optimal values of the relative residual norm and the acceptance rate.}
    \label{fig:optimrjpo}
\end{figure}

\subsection{Adaptive tuning of the resolution accuracy} \label{sec:adrjpo}
The suited value of the relative residual norm $\epsilon$ to achieve a desired acceptance rate~$\alpha_t$ can be adjusted recursively using a Robbins-Monro type algorithm~\cite{bercu2012robbins}. The result sampling approach relies on the family of controlled/adaptive MCMC methods~\cite{Andrieu01}. See for instance \cite{andrieu2008tutorial} for a tutorial. Such an adaptive scheme is formulated in the stochastic approximation framework~\cite{benveniste2012adaptive} in order to solve a non-linear equation of the form $g(\theta)=0$ using an update 
 \begin{equation}
\theta_{n+1} = \theta_{n} + K_{n}\left[g(\theta_{n})+\nu_{n}\right]
\end{equation}
where $\nu$ is a random variable traducing the uncertainty on each evaluation of function $g(\cdot)$ and $\{K_{n}\}$ is a sequence of step-sizes ensuring stability and convergence~\cite{Andrieu03}. Such a procedure has been already used for the optimal scaling of adaptive MCMC algorithms~\cite{andrieu2008tutorial}. The use of Robbins-Monro procedure for the optimal scaling of some adaptive MCMC algorithms such as the Random walk Metropolis-Hastings (RWMH) algorithm. It is mainly shown that such procedure breakdown the Markovian structure of the chain but it does not alter its convergence towards the target distribution. For instance, it is used in~\cite{Haario01,Atchade05} to set adaptively the scale parameters of a RWMH algorithm in order to reach the optimal acceptance rate suggested by theoretical or empirical analysis~\cite{Roberts97,Gelman96}. The same procedure was also used by \cite{Atchade06} for the adaptive tuning of a Metropolis-adjusted Langevin algorithm (MALA) to reach the optimal acceptance rate proposed by \cite{Roberts98}.

\subsubsection{Achieving a target acceptance rate}
In order to ensure the positivity of the relative residual norm $\epsilon$, the update is performed on its logarithm. At each iteration $n$ of the sampler, the relative residual norm is \cor{adjusted} according to
\begin{equation}
\log \epsilon_{n+1} = \log \epsilon_{n} + K_{n} \left[\alpha(\xbold_{n},\xb_{n}) - \alpha_t\right].
\end{equation}
where $\alpha_t$ is a given target acceptance probability and $\{K_n\}$ is a sequence of step-sizes decaying to $0$ as $n$ grows in order to ensure the convergence of the Markov chain to the target distribution. As suggested in~\cite{andrieu2008tutorial}, the step-sizes are chosen according to $K_{n}= K_{0}/n^{\beta}$, with $\kappa \in ]0, 1]$. We emphasize that more sophisticated methods, such as those proposed in~\cite{bercu2012robbins} could be used to approximate the acceptance rate curve and to derive a more efficient adaptive strategy for choosing this parameter.

The adaptive RJPO is applied to the sampling of the previously described Gaussian distribution using the adopted step-size with parameters $K_{0} = 1$ and $\kappa = 0.5$. \figref{fig:RJPO:automatic} presents the evolution of the average acceptance probability and the obtained relative residual norm for three different values of the target acceptance rate $\alpha_t$. One can note that the average acceptance rate converges to the desired value. Moreover, the relative residual norm also converges to the expected values according to \figref{fig:TvsRJ:alpha} (for example, the necessary relative residual norm to get an acceptance probability \cor{$\alpha_{t}=0.8$} is equal to $1.5 \cdot 10^{-3}$).
\begin{figure}[h!]
\centering
     \begin{tabular}{cc}
             \includegraphics[width=8cm]{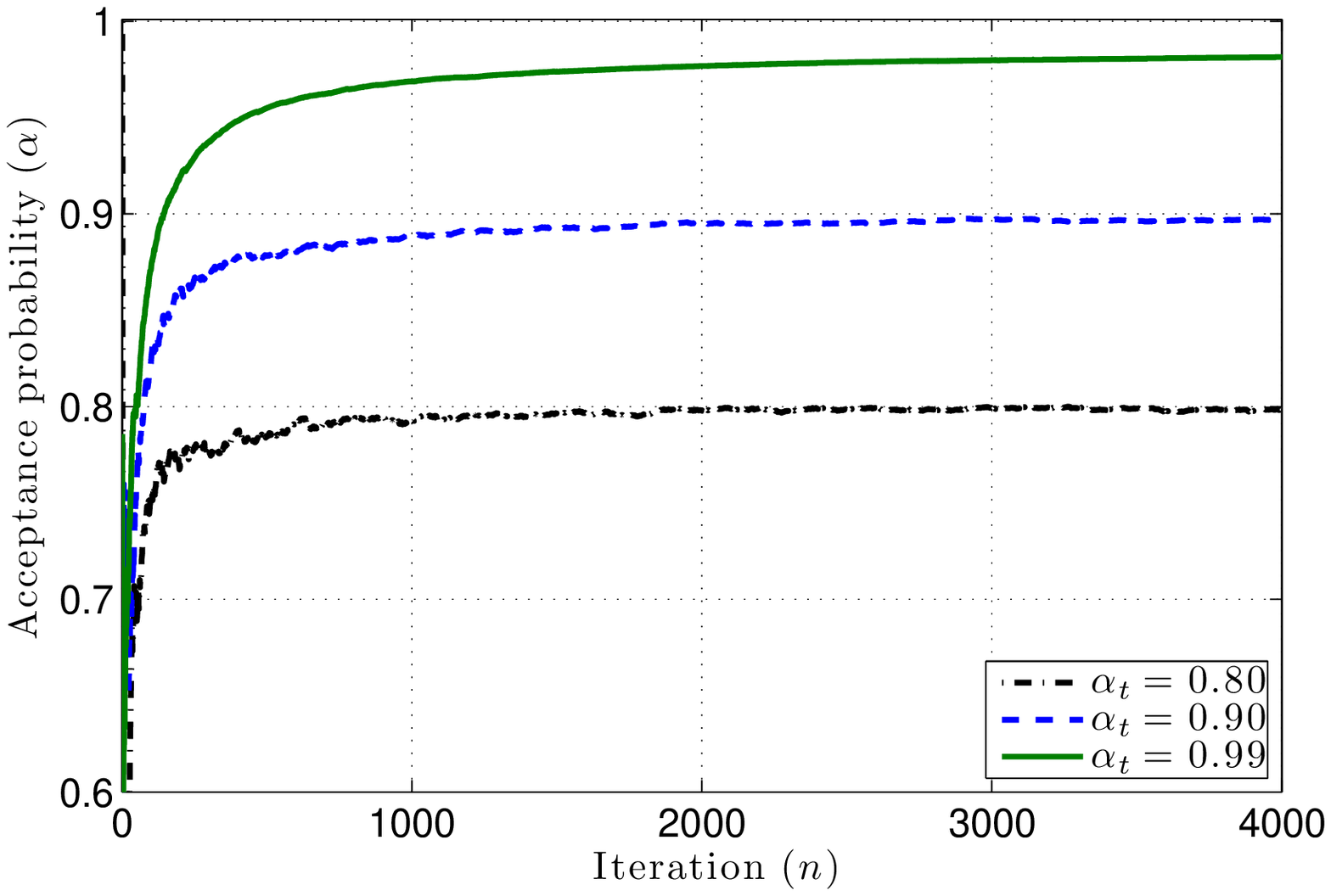} &
             \includegraphics[width=8cm]{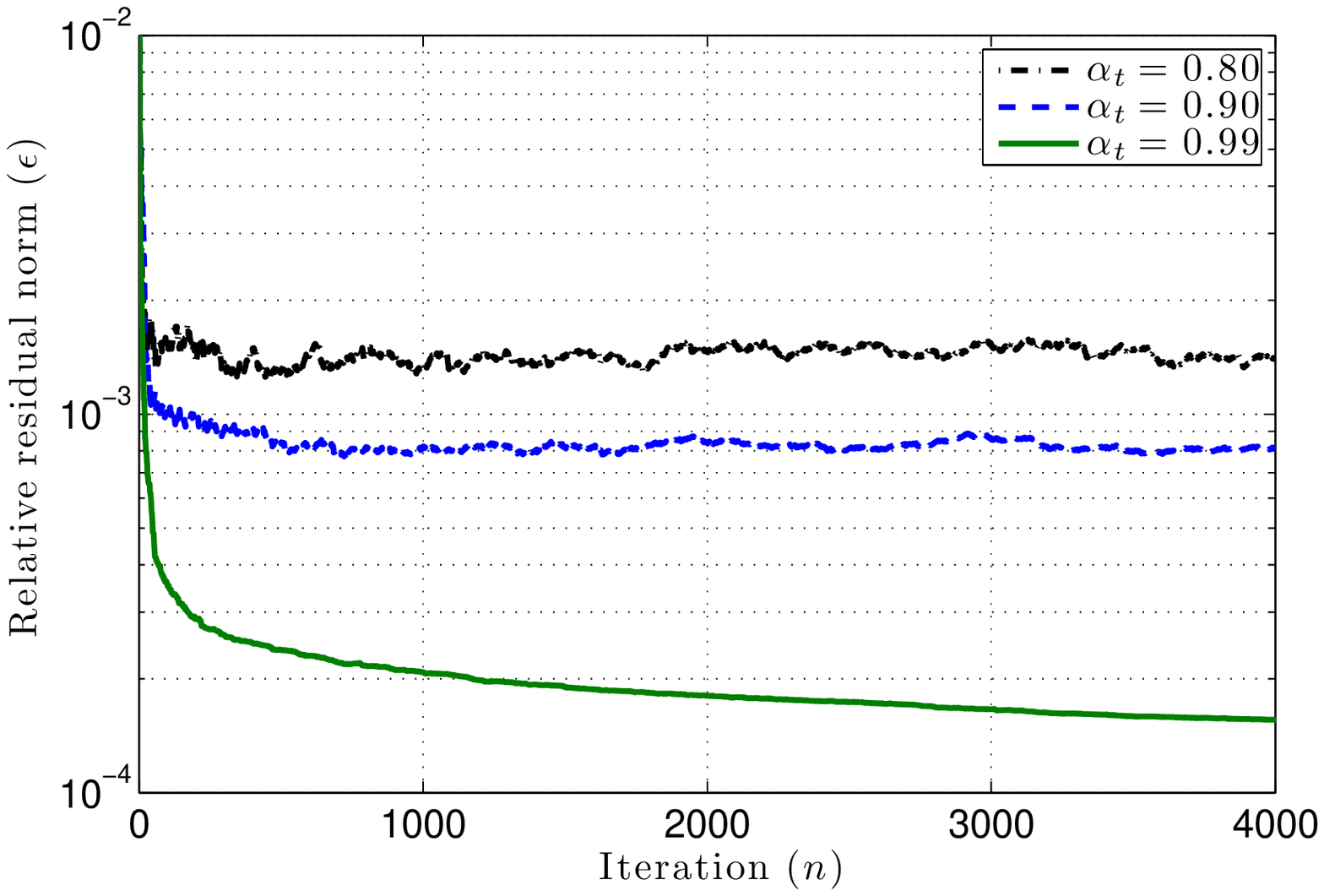} \\
                          {\small (a) Acceptance probability} & 
             {\small (b) Relative residual norm} \\
             \end{tabular}
\caption{Behavior of the adaptive RJPO for 1000 iterations and three values of the target acceptance probability: (a) Evolution of the average acceptance probability  and (b) Evolution of the computed relative residual norm. \label{fig:RJPO:automatic}}
\end{figure}
In practice, it remains difficult to \aprio determine which acceptance rate should be targeted to achieve the faster convergence. The next subsection proposes to modify the target of the adaptive strategy to directly minimize the CCES~\eqref{CCES}.

\subsubsection{Optimizing the numerical efficiency}
A given threshold $\epsilon$ on the relative residual norm induces an average truncation level $J$ and an ESSR value, from which the CCES can be deduced according to~\eqref{CCES}. Our goal is to adaptively adjust the threshold value $\epsilon$ in order to minimize the CECS. Let $J_{\textrm{opt}}$ be the average number of CG iterations per sample corresponding to the optimal threshold value. In the plane $(J,\textrm{ESSR)}$, it is easy to see that $J_{\textrm{opt}}$ is the abscissa of the point at which the tangent of the ESSR curve intercepts the origin (see \figref{fig:esscost}).

\paragraph{}The ESSR is expressed by \eqref{ESSR} as a function of the chain correlation $\rho$, the latter being an implicit function of the acceptance rate $\alpha$. For $\alpha=1$, $\rho=0$ according to Proposition~\ref{th_3}. For $\alpha=0$, $\rho=1$ since no new sample can be accepted. For intermediate values of $\alpha$, the correlation lies between 0 and 1, and it is typically decreasing. It can be decomposed on two terms:
\bit
\item With a probability $1-\alpha$, the accept-reject procedure produces identical (\ie maximally correlated) samples in case of rejection.
\item In case of acceptance, the new sample is slightly correlated with the previous one, because of the early stopping of the CG algorithm.
\eit
While it is easy to express the correlation induced by rejection, it is difficult to find an explicit expression for the correlation between accepted samples. However, we have checked that the latter source of correlation is negligible compared to the correlation induced by rejection. If we approximately assume that accepted samples are independent, we get $\rho=1-\alpha$, which implies
$$
\textrm{ESSR} = \dfrac{2-\alpha}{\alpha}  \Longrightarrow \textrm{CCES} = \dfrac{\alpha}{2-\alpha} \; J.
$$
Thus, by necessary condition, the best tuning of the relative residual norm leading to the lowest CCES is obtained by setting,
$$
J \, \dfrac{d \alpha}{dJ} - \alpha + \dfrac{\alpha^{2}}{2} = 0.
$$

\paragraph{} The stochastic approximation procedure is applied to adaptively adjust the optimal value of the relative residual norm according to
\begin{equation}
\log \epsilon_{n+1} = \log \epsilon_{n} + K_{n} \left(J_{n} \, \dfrac{d \alpha_{n}}{dJ} - \alpha_{n} + \dfrac{\alpha_{n}^{2}}{2} \right),
\end{equation}
where $ \dfrac{d \alpha_{n}}{dJ}$ is evaluated numerically. \figref{fig:adrjpo} illustrates that the proposed adaptive scheme efficiently adjusts $\epsilon$ to minimize the CCES, since $J_{\textrm{opt}}$ is around 25 according to \figref{fig:esscost}.
 \begin{figure}[h!]
    \centering
\subfigure[Efficiency\label{fig:esscost}]{\includegraphics[height=6cm]{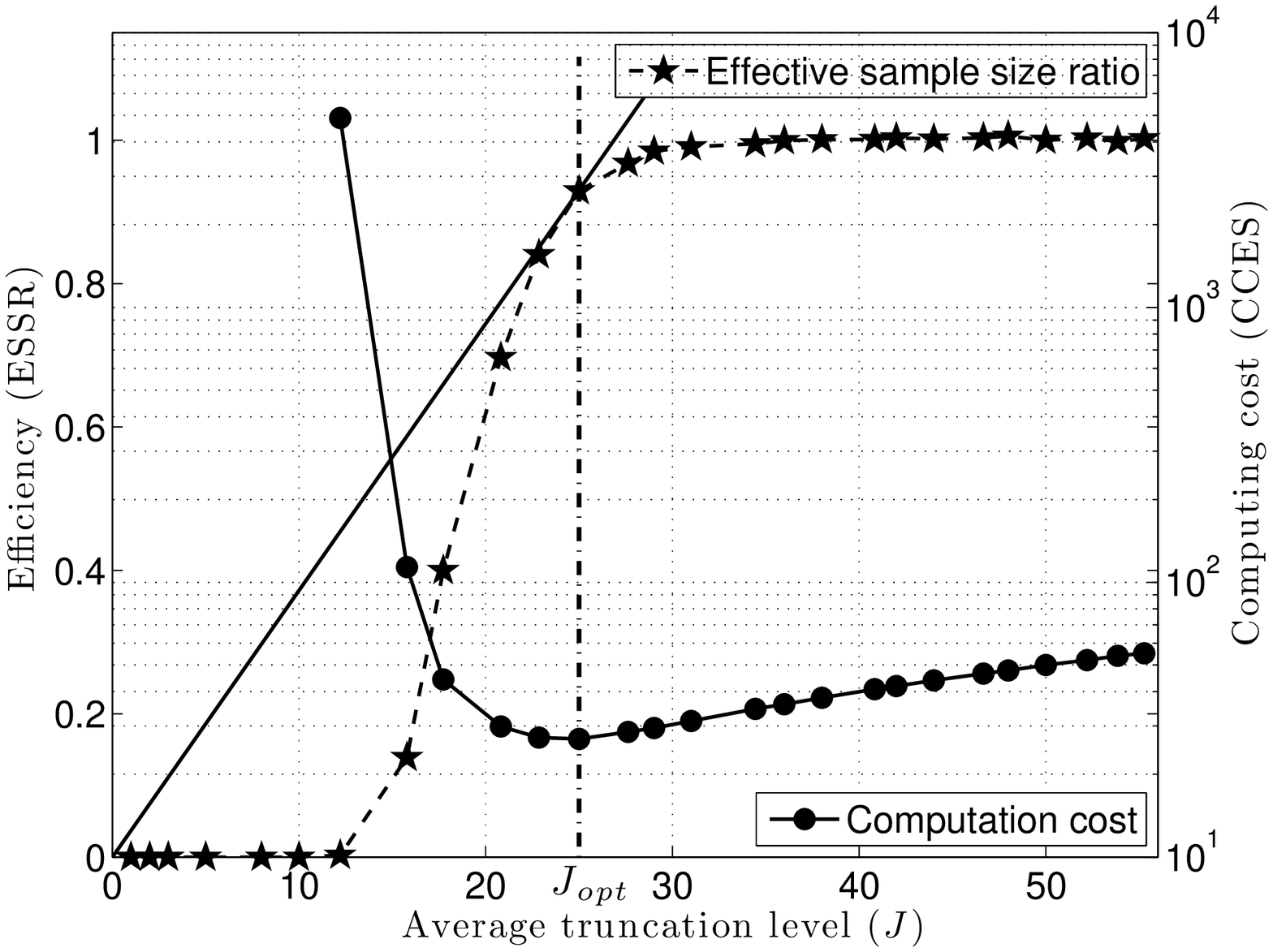}}
\subfigure[Adaptive tuning \label{fig:adrjpo}]{\includegraphics[height=6cm]{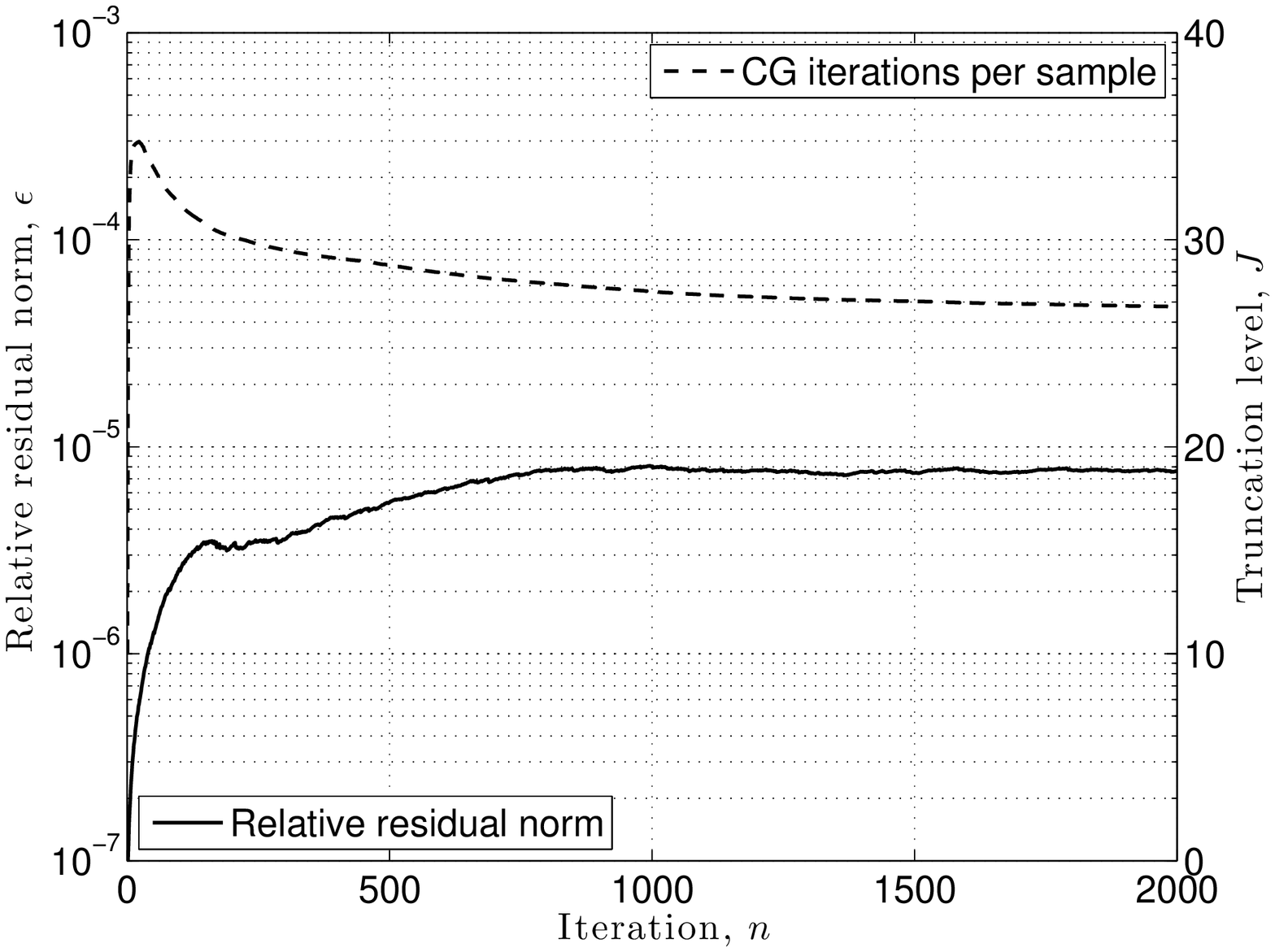}}
    \caption{(a) Influence of the CG truncation level on the overall computation cost and the statistical efficiency of the RJPO for sampling a Gaussian of dimension $N=128$. (b) Evolution of the relative residual norm and the acceptance rate for a Gaussian sampling problem of size $N=128$. The adaptive algorithm leads to a relative residual norm $\epsilon_{\textrm{opt}}=7.79 \cdot 10^{-6}$ leading to $\alpha_{\textrm{opt}}=0.977$ and $J_{\textrm{opt}}=26$.
    \label{fig:adrjpo}}
\end{figure}
\newpage

%
%

\section{Application to unsupervised super-resolution} \label{sec:appli}

In the linear inverse problem of  unsupervised image super-resolution, several images are observed with a low spatial resolution. In addition, the measurement process presents a point spread function that introduces a blur on the images. The purpose is then to reconstruct the original image with a higher resolution using an unsupervised method. Such an approach allows to also estimate the model hyper-parameters and the PSF~\cite{orieux2012sampling,park2003super,rochefort2006improved}. In order to discuss the relevance of the previously presented Gaussian sampling algorithms we apply a Bayesian approach and MCMC methods for solving this inverse problem. 

\subsection{Problem statement}
The observation model is given by $\yb~=~\Hb\xb+\nb$, where $\Hb = \Pb\Fb$, with $\yb\in\eR^M$ the vector containing the pixels of the observed images in a lexicographic order, $\xb\in\eR^N$ the sought high resolution image, $\Fb$ the $N\times N$ circulant convolution matrix associated with the blur, $\Pb$ the $M\times N$ decimation matrix and $\nb$ the additive noise.  

\paragraph{Statistical modeling.} The noise is assumed to follow a zero-mean Gaussian distribution with an unknown precision matrix $\Qb_y~=~\gamma_y\Ib$. We also assume a zero-mean Gaussian distribution for the prior of the sought variable $\xb$, with a precision matrix $\Qb_x=\gamma_x\Db\T\Db$. $\Db$ is the circulant convolution matrix associated to a Laplacian filter. Non-informative Jeffrey's priors~\cite{jeffreys1946invariant} are assigned to the two hyper-parameters $\gamma_y$ and $\gamma_x$.

\paragraph{Bayesian inference.} According to Bayes' theorem, the posterior distribution is given by
\begin{equation*}
P(\xb,\gamma_x,\gamma_y|\yb)\propto\gamma_x^{(N-1)/2-1}\gamma_y^{M/2-1}\times
e^{-\frac{1}{2}\gamma_y(\yb-\Hb\xb)\T(\yb-\Hb\xb)-\frac{1}{2}\gamma_x\xb\T\Db\T\Db\xb}
\end{equation*}
To explore this posterior distribution, a Gibbs sampler iteratively draws samples from the following conditional distributions:
\begin{enumerate}
\item $\gamma_y^{(n)}$ from $P\left(\gamma_y|\xb^{(n-1)},\yb\right)$ given as
$$
\Gc\left(1+\frac{M}{2}, 2 ||\yb-\Hb\xb^{(n-1)}||^{-2}\right),
$$
\item $\gamma_x^{(n)}$ from $ P\left(\gamma_x|\xb^{(n-1)}\right)$ given as
$$
\Gc\left(1+\frac{N-1}{2}, 2 ||\Db\xb^{(n-1)}||^{-2}\right)
$$
\item $\xb^{(n)}$ from $  P\left(\xb|\gamma_x^{(n)}, \gamma_y^{(n)}, \yb\right)$ which is
$$
\Nc\left(\mub^{(n)}, \left[\Qb^{(n)}\right]\M\right)
$$
with
\begin{align*}
\Qb^{(n)}&=\gamma_y^{(n)}\Hb\T\Hb+\gamma_x^{(n)}\Db\T\Db\\
\Qb^{(n)}\mub^{(n)}&=\gamma_y^{(n)}\Hb\T\yb
\end{align*}

\end{enumerate}
The third step of the sampler requires an efficient sampling of a multivariate Gaussian distribution whose parameters change along the sampling iterations. In the sequel, direct sampling with Cholesky factorization~\cite{rue2001fast} is firstly employed as a reference method. It yields the same results as the E-PO algorithm. For the inexact resolution case, the T-PO algorithm using a CG controlled by the relative residual norm, and the adaptive RJPO directly tuned with the acceptance probability are performed. For these two methods, the product matrix-vector used in the CG algorithm is done by exploiting the structure of the precision matrix $\Qb$ and thus only implies circulant convolutions, performed by FFT, and decimations. 

\subsection{Estimation results using MCMC}
We consider the observation of five images of dimension $128\times 128$ pixels ($M=81920$) and we reconstruct the original one of dimension $256\times 256$ ($N=65536$). The convolution part $\Fb$ has a Laplace shape with of full width at half maximum (FWHM) of 4 pixels. A white Gaussian noise is added to get a signal-to-noise ratio (SNR) equal to 20dB. The original image and one of the observations are shown in \figref{fig:sr:all}. 
\begin{figure}[h!]
\centering
{\small
\begin{tabular}{ccc}
 \includegraphics[width=0.29\columnwidth]{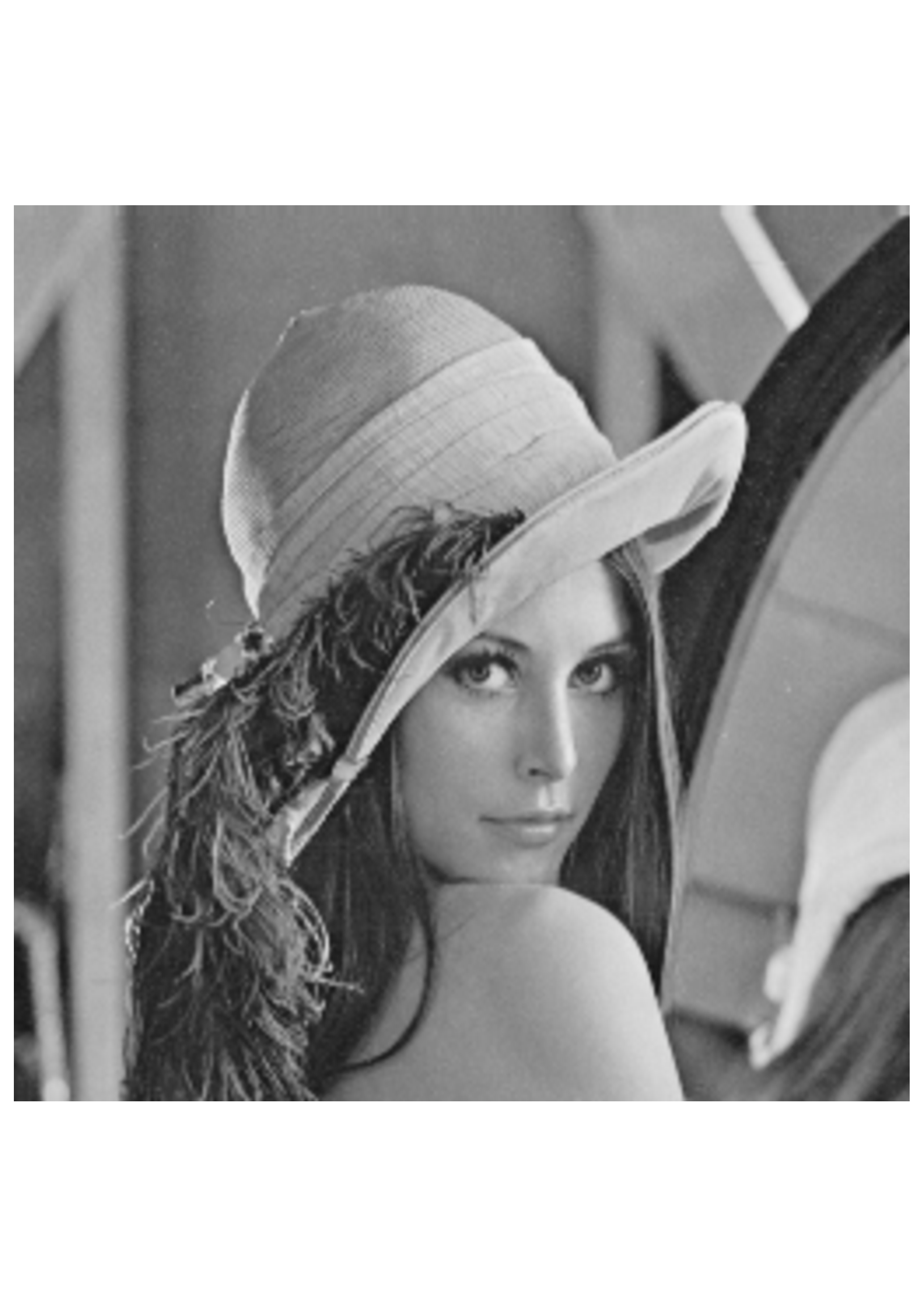} &
 \includegraphics[width=0.29\columnwidth]{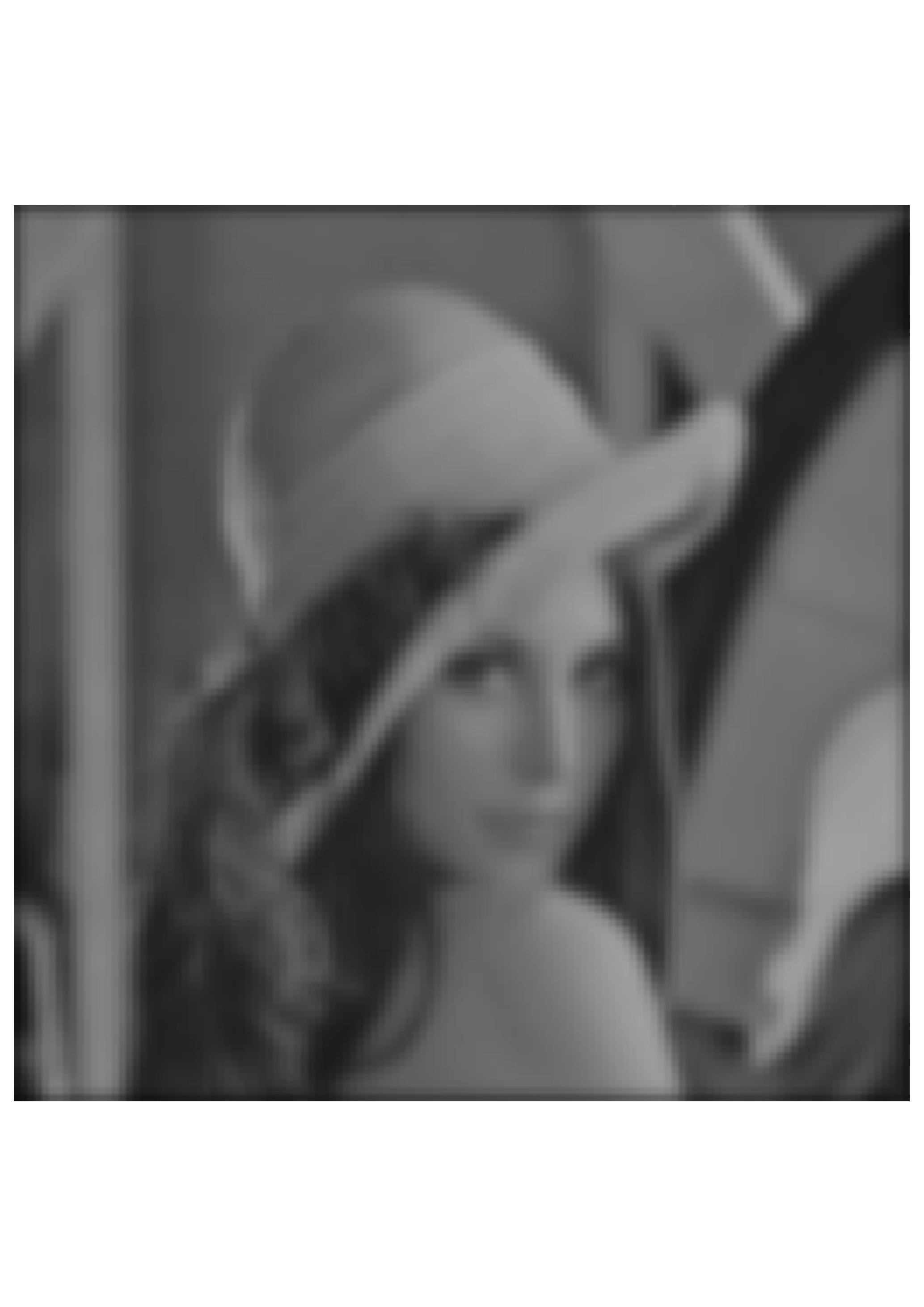}&
 \includegraphics[width=0.29\columnwidth]{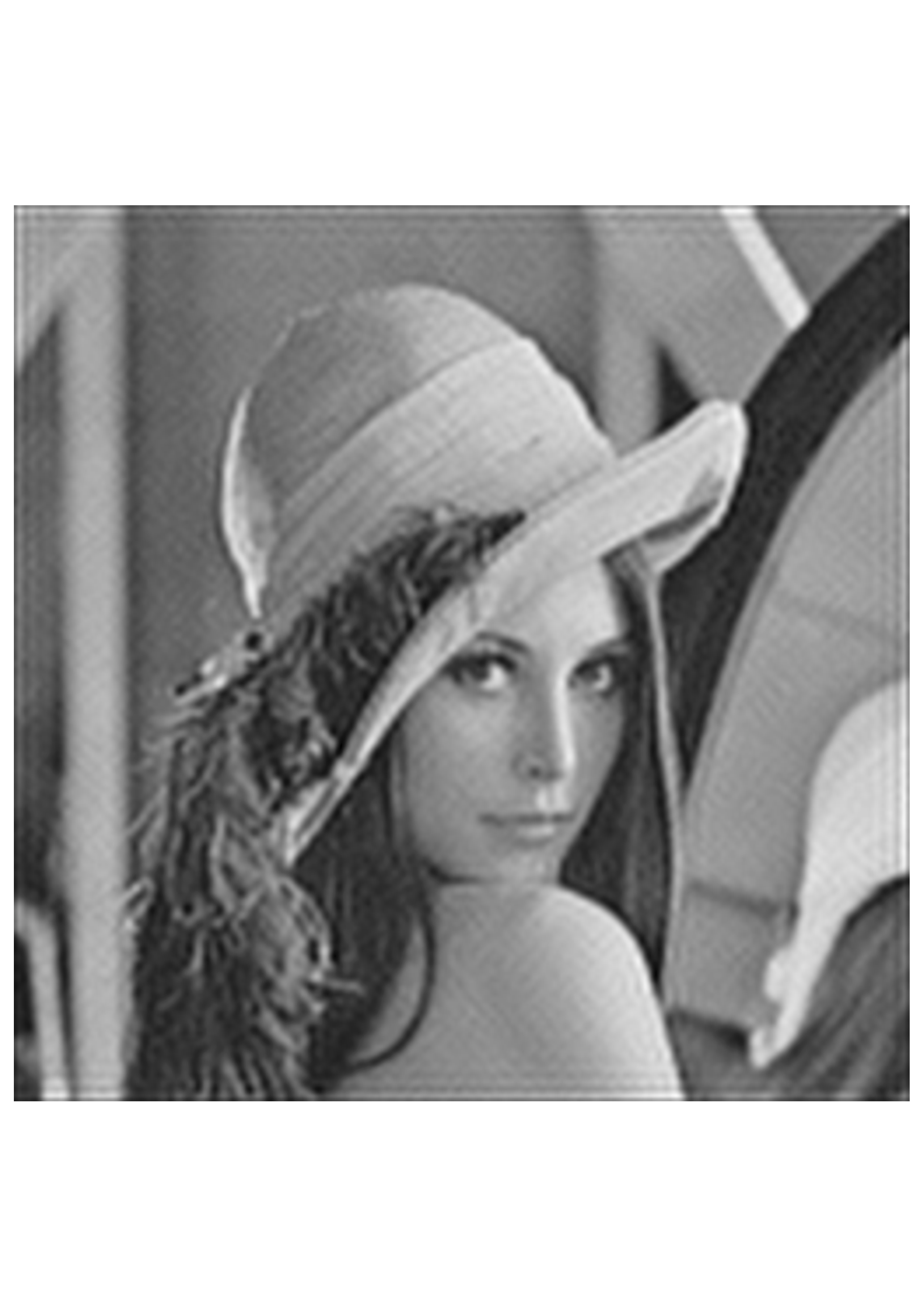}\\
Original & Observation & Reconstructed
\end{tabular}}
    \caption{%
        Unsupervised super-resolution - image reconstruction with adaptive RJPO algorithm and $\alpha_t=0.99$.}%
   \label{fig:sr:all}
\end{figure}

The Gibbs sampler is run for 1000 iterations and a burn-in period of 100 iterations is considered after a visual inspection of the chains. The performances are evaluated in terms of the mean and standard deviation of both hyper-parameters $\gamma_y$, $\gamma_x$ and one randomly chosen pixel $x_i$ of the reconstructed image. Table~\ref{tab:sr} presents the mean and standard deviation of the variable of interest. As we can see, the T-PO algorithm is totally inappropriate even with a precision of $10^{-8}$. Conversely, the estimation from the samples given by the adaptive RJPO and Cholesky method are very similar, which demonstrates the correct behavior of the proposed algorithm. 

\begin{table}[h!]
\centering
\begin{tabular}{|cc||c|c|c|}
\hline
&		& $\gamma_y$ & $\gamma_x\times10^{-4}$ & $x_i$ \\
\hline
\multicolumn{2}{|c|}{Cholesky} & 102.1~(0.56) & 6.1~(0.07) & 104.6~(9.06) \\
\hline
T-PO & $\epsilon=10^{-4}$ & 0.3~(0.06) & $45~(0.87)$ & 102.2~(3.30) \\
\hline
T-PO & $\epsilon=10^{-6}$ & 6.8~(0.04) & $32~(0.22)$ & 104.8~(2.34) \\
\hline
T-PO & $\epsilon=10^{-8}$ & 71.7~(0.68) & $21~(0.29)$ & 102.7~(2.51)\\
\hline
A-RJPO, & $\alpha_t=0.99$ & 101.2~(0.55)& $6.1~(0.07)$ & 101.9~(8.89) \\
\hline
\end{tabular}
\caption{Comparison between the Cholesky approach, the T-PO controlled by the relative residual norm and the A-RJPO tuned by the acceptance rate, in terms of empirical mean and standard deviation of hyper-parameters and one randomly chosen pixel.}\label{tab:sr}
\end{table}

\figref{fig:sr:alphas} shows the evolution of the average acceptance probability with respect to the number of CG iterations. We can notice that at least 400 iterations are required to have a nonzero acceptance rate. Moreover, more than 800 iterations seems unnecessary. For this specific problem, the E-PO algorithm needs theoretically $N=65536$ iterations to have a new sample while the adaptive RJPO only requires around 700. Concerning  the computation time, on a Intel Core i7-3770 with 8GB of RAM and a 64bit system, it took about 20.3s on average and about 6GB of RAM for the Cholesky sampler to generate one sample and only 15.1s and less than 200MB for the RJPO. This last result is due to the use of a conjugate gradient on which each matrix-vector product is performed without explicitly writing the matrix \Qb. Finally, note that if we consider images of higher resolution, for instance $N=1024\times1024$, the Cholesky factorization would require around 1TB of RAM and the Adaptive RJPO only about 3GB (when using double precision floating-point format).
\begin{figure}[h!]
\centering
\includegraphics[width=8cm]{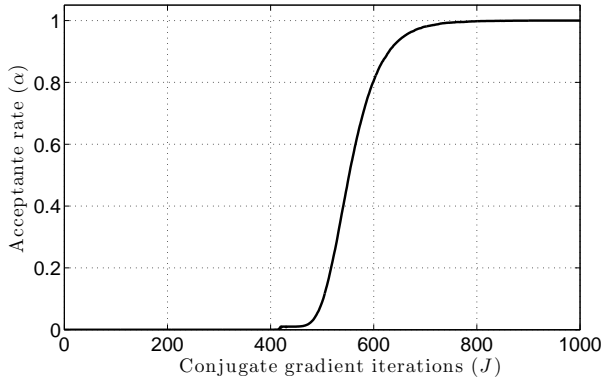}
\caption{Evolution of the acceptance rate with respect to average conjugate gradient iterations for sampling a Gaussian of dimension $N=65536$.}
\label{fig:sr:alphas}
\end{figure}
%
%
\section{Conclusion}
The sampling of high dimensional Gaussian distributions appears in the resolution of many linear inverse problems using MCMC methods. Alternative solutions to the Cholesky factorization are needed to reduce the computation time and to limit the memory usage. Based on the theory of reversible jump MCMC, we derived a sampling method allowing to introduce an approximate solution of a linear system during the sample generation step. The approximate resolution of a linear system was already adopted in methods like IFP and PO to reduce the numerical complexity, but without any guarantee of convergence to the target distribution. The proposed algorithm RJPO is based on an accept-reject step that is absent from the existing PO algorithms. Indeed, the difference between RJPO and existing PO algorithms is much comparable to the difference between the Metropolis-adjusted Langevin algorithm (MALA)~\cite{Roberts96}~and a plainly discretized Langevin diffusion.

Our results pointed out that the required resolution accuracy in these methods must be carefully tuned to prevent a significant error. It was also shown that the proposed RJ-MCMC framework allows to ensure the convergence through the accept-reject step whatever the truncation level. In addition, thanks to the simplicity of the acceptance probability, the resolution accuracy can be adjusted automatically using an adaptive scheme allowing to achieve a pre-defined acceptation rate. We have also proposed a significant improvement of the same adaptive tuning approach, where the target is directly formulated in terms of minimal computing cost per effective sample.

Finally, the linear system resolution using the conjugate gradient algorithm offers the possibility to implement the matrix-vector products with a limited memory usage by exploiting the structure of the forward model operators. The adaptive RJPO has thus proven to be less consuming in both computational cost and memory usage than any approach based on Cholesky factorization.

This work opens some perspectives in several directions. Firstly, preconditioned conjugate gradient or alternative methods can be envisaged for the linear system resolution with the aim to reduce the computation time per iteration. Such an approach will highly depend on the linear operator and the ability to compute a preconditioning matrix. A second direction concerns the connection between the RJ-MCMC framework and other sampling methods such as those based on Krylov subspace~\cite{aune2013iterative,parker2012sampling}, particularly with appropriate choices of the parameters \Ab, \Bb, \bb and $\fb(\cdot)$ defined in section~\ref{sec:revjump}. Another perspective of this work is to analyze more complex situations involving non-gaussian distributions with the aim to be able to formulate the perturbation step and to perform an approximate optimization allowing to reduce the computation cost. Finally, the proposed adaptive tuning scheme allowing to optimize the computation cost per effective sample could be generalized to other Metropolis adjusted sampling strategies.

\appendix

\section{Expression of the acceptance probability}
\label{annex_deltaS}
According to the RJ-MCMC theory, the acceptance probability is given by
\[
\alpha (\xbold, \xb|\zb) = \min\left(1,\frac{P_\Xb(\xb)P_\Zb(\sb|\xb)}{P_\Xb(\xbold)P_\Zb(\zb|\xbold)}|J_{\phib}(\xbold, \zb)|\right),
\]
with $\sb=\zb$ and $\xb = -\xbold + \fb(\zb)$. The Jacobian determinant of the deterministic move is $|J_{\phib}(\xb,\zb)|=1$.
Since
\[
P_\Xb(\xb)
\propto e^{-\frac{1}{2}(\xb-\mub)\T\Qb(\xb-\mub)},
\]
and
\[
P_\Zb(\zb|\xbold)
\propto
e^{-\frac{1}{2}(\zb-\Ab\xbold-\bb)\T\Bb\M(\zb-\Ab\xbold-\bb)},
\]
the acceptance probability can be written as $$\alpha(\xbold, \xb|\zb) =  \min\left(1,e^{-\frac{1}{2}\Delta S}\right)$$ with $ \Delta S = \Delta S_1+\Delta S_2$ and 
\begin{align*}
\Delta S_1&=(\xb-\mub)\T\Qb(\xb-\mub)-(\xbold-\mub)\T\Qb(\xbold-\mub),\\
&=\xb\T\Qb\xb-2\xb\T\Qb\mub-\xbold\T\Qb\xbold+2\xbold\T\Qb\mub.\\
\Delta S_2&=(\zb-\Ab\xb-\bb)\T\Bb\M(\zb-\Ab\xb-\bb) -(\zb-\Ab\xbold-\bb)\T\Bb\M(\zb-\Ab\xbold-\bb),\\
&=\xb\T\Ab\T\Bb\M\Ab\xb-2\xb\T\Ab\T\Bb\M(\zb-\bb) -\xbold\T\Ab\T\Bb\M\Ab\xbold+2\xbold\T\Ab\T\Bb\M(\zb-\bb).
\end{align*}
Since $\xb=-\xbold+\fb(\zb)$, we get
\begin{align*}
\Delta S_1&=(\xb-\xbold)\T\Qb\left(\fb(\zb)-2\mub\right)\\
\Delta S_2&=(\xb-\xbold)\T\left(\Ab\T\Bb\M\Ab \, \fb(\zb)-2\Ab\T\Bb\M(\zb-\bb)\right)
\end{align*}
Finally
\begin{align*}
\Delta S&=\left(\xb-\xbold\right)\T
\FINAL{\\&}
\left[\left(\Qb+\Ab\T\Bb^{-1}\Ab\right)\fb(\zb)-2\left(\Qb\mub+\Ab\T\Bb^{-1}\left(\zb-\bb\right)\right)\right]
\FINAL{\\&}
=2 \,  \left(\xbold-\xb\right)\T\rb(\zb).
\end{align*}
Finally, when the system is solved exactly, $\Delta S = 0$ and thus $\alpha(\xbold, \xb|\zb)=1$. 

\section{Correlation between two successive samples}
\label{annex_corr}
Since
\begin{align*}
\xbnew=-\xbold+2\left(\Qb+\Ab\T\Bb^{-1}\Ab\right)^{-1}
\left(\Qb\mub +\Ab\T\Bb^{-1}(\zb-\bb)\right)
\end{align*}
and  $\zb$ is sampled from $\Nc\left(\Ab\xbold+\bb, \Bb\right)$, we have
\begin{align*}
\xbnew=-\xbold+2&\left(\Qb+\Ab\T\Bb^{-1}\Ab\right)^{-1}
\FINAL{\\&}
\left(\Qb\mub +\Ab\T\Bb^{-1}\Ab\xbold+\Ab\T\Bb\M\omegab_{\Bb}\right)
\end{align*}
with $\omegab_{\Bb}$ totally independent of $\xbold$. One can firstly check that $\mathbb{E}\left[\xbold\right]=\mathbb{E}\left[\xbnew\right]=\mub$. Consequently, the correlation between two successive samples is given by
\begin{equation*}
\mathbb{E}\left[(\xbnew-\mub)(\xbold-\mub)\T\right]=\FINAL{\\}\left(2\left(\Qb+\Ab\T\Bb^{-1}\Ab\right)^{-1}\Ab\T\Bb^{-1}\Ab-\Ib\right)\Qb\M
\end{equation*}
which is zero if and only if $\Ab\T\Bb\M\Ab=\Qb$.

\bibliographystyle{plain}

\bibliography{revueabr,Gilavert2014rjpo} 

 \end{document}